\documentclass[journal]{IEEEtran}

\usepackage{cite}
\usepackage{amsmath,amssymb,amsfonts,amsthm}

\usepackage[ruled,linesnumbered]{algorithm2e}
\usepackage{algorithmicx}
\usepackage{algpseudocode}

\usepackage{graphicx}
\usepackage{textcomp}
\usepackage{array}
\usepackage{booktabs}
\usepackage{makecell}
\usepackage{color}

\usepackage{subfig}
\usepackage[margin=0.2cm]{caption}

\usepackage{stfloats}
\usepackage{url}
\usepackage{verbatim}
\usepackage{balance}

\def\BibTeX{{\rm B\kern-.05em{\sc i\kern-.025em b}\kern-.08em
		T\kern-.1667em\lower.7ex\hbox{E}\kern-.125emX}}

\newtheorem{theorem}{Theorem}

\newtheorem{definition}{Definition}

\begin{document}

\title{CRT: Collision-Tolerant Residence Time for Deterministic Transmission in LEO Satellite Networks}

\author{
  Siqi Yang, 
  Zonghui Li, 
  Chaoqun You, 
  Yue Gao, ~\IEEEmembership{Fellow,~IEEE}
\thanks{S. Yang, C. You and Y. Gao are with the Institue of Space Internet, Fudan University, Shanghai 200438, China, and the College of Computer Science and Artificial Intelligence, Fudan University, Shanghai 200438, China (email: sqyang23@m.fudan.edu.cn, chaoqunyou@fudan.edu.cn, gao.yue@fudan.edu.cn). Z. Li is with Beijing Key Lab of Traffic Data Analysis and Mining, School of Computer Science and technology, Beijing Jiaotong University, Beijing 100044, China (email: lizonghui@bjtu.edu.cn). The corresponding authors are Z. Li and C. You. }
}

\maketitle

\begin{abstract}
	Low-Earth Orbit (LEO) satellite networks are a key enabler for the 6G Non-Terrestrial Network (NTN) architecture. However, supporting time-sensitive services in LEO networks is challenging due to highly dynamic topologies and the difficulty of maintaining precise global time synchronization. Existing Time-Sensitive Networking (TSN) mechanisms largely rely on static topologies and strict  synchronization, which makes them ill-suited to dynamic LEO environments. To address this issue, we propose CRT, a deterministic transmission framework tailored for LEO networks. CRT regulates per-hop residence time using local clocks, thereby compensating for link-delay variations without requiring strict global synchronization. To handle asynchronous collisions, CRT adopts a collision-tolerant scheduling strategy that maximizes the number of schedulable flows while bounding collision-induced jitter. We formalize the corresponding scheduling problem and show that it is NP-hard. We further develop CRT-Fast, an efficient heuristic algorithm. It combines iterative layering with path continuity to control collision intensity and improve path stability under topology changes. Simulations on Iridium and Starlink constellations show that the proposed method achieves lower delay jitter and high schedulability under heavy traffic loads.
\end{abstract}

\begin{IEEEkeywords}
	LEO Satellite Networks, 6G NTN, Time-Sensitive Networking, Deterministic Transmission, Scheduling
\end{IEEEkeywords}

\section{Introduction}
Low-Earth-Orbit (LEO) satellite networks are gaining attention for global connectivity due to their broad coverage and high availability. Driven by the miniaturization of hardware and reduced launch expenditures, massive LEO constellations, such as Starlink\cite{starlink}, Iridium\cite{iridium}, and OneWeb\cite{Oneweb}, are proliferating to bridge the digital divide in isolated and underserved regions. 
Concurrently, the transition toward 6G frameworks\cite{popovski2022perspective,yu2023toward} has imposed increasingly stringent requirements on LEO infrastructures. Time-critical services such as tele-surgery, autonomous transport, and industrial automation require strictly bounded latency and near-zero packet loss.

To fulfill these exacting requirements, LEO networks must facilitate the deterministic transmission of data flows\cite{wang2024learning,sun2025joint}.

Nevertheless, the inherent volatility of LEO constellations presents a formidable obstacle to achieving deterministic communication. The high-velocity orbital motion of satellites results in transient link unavailability and frequent handovers\cite{xiao2022leo, ma2023network,cao2023satcp}. Consequently, both Inter-Satellite Links (ISLs) and Ground-Satellite Links (GSLs) suffer from fluctuating propagation delays and intermittent connectivity. Such instability complicates the precise scheduling of time-sensitive traffic and undermines deterministic service guarantees. This leads to a fundamental challenge: maintaining deterministic performance within a highly fluid topology. This challenge must be addressed before LEO networks can support mission-critical time-sensitive services.

Deterministic networking technologies have recently garnered considerable interest  \cite{tian2024large,wang2021large,hu2024time}. Distinguishing itself from traditional best-effort routing that calculates paths on a hop-by-hop basis in real time, these technologies emphasize the pre-establishment of end-to-end (e2e) routes and the advance reservation of network resources. Time-Sensitive Networking (TSN) \cite{finn2018introduction,nasrallah2018ultra}, standardized by the IEEE 802.1 working group, has emerged as a primary technical framework for providing such guarantees.

The TSN framework integrates gate-controlled scheduling, time synchronization, preemption, and seamless redundancy. This allows Ethernet to support deterministic delivery for delay-critical traffic while maintaining compatibility with best-effort services. Rooted in the Time-Triggered (TT) communication paradigm\cite{kopetz2003time}, TSN aligns end systems and switches through precise time synchronization before executing transmissions at predefined instants. Network-wide synchronization is typically managed via IEEE 802.1AS\cite{IEEE8021AS}, which builds upon the IEEE 1588 precision clock protocol\cite{IEEE1588}. Furthermore, IEEE 802.1Qbv\cite{IEEE8021Qbv} introduces the Time-Aware Shaper, utilizing Gate Control Lists (GCLs) to regulate packet forwarding at high-precision intervals.

However, satellite links differ significantly from terrestrial TSN environments due to continuous and rapid propagation-delay variations, which can reach tens of milliseconds\cite{pan2023measuring} and induce substantial jitter. Furthermore, attaining nanosecond-level time synchronization across multi-satellite LEO systems is exceptionally difficult. Discrepancies in propagation delay can cause signal misalignment and simultaneous interference, hindering both synchronous reception and accurate peer-delay estimation. Factors such as orbital perturbations and the non-spherical geometry of the Earth further aggravate synchronization errors\cite{chen2024asynchronous}. Consequently, terrestrial deterministic methods, which rely on stable topologies and negligible sync errors, may become suboptimal or entirely ineffective when deployed in LEO constellations.

To address these limitations, we propose CRT, a novel scheduling mechanism for deterministic transmission in LEO satellite networks. CRT addresses two fundamental challenges: (1) delay jitter caused by highly dynamic satellite topologies, and (2) limited time synchronization among local clocks in LEO satellites. To address them, CRT combines three key strategies.

First, instead of relying on globally synchronized transmission instants, CRT computes the residence time of each data flow at every switch node based on local timestamps. In this way, it stabilizes the e2e delay without requiring precise global synchronization. Second, recognizing that strict non-overlapping routing limits network capacity, CRT adopts a collision-tolerant strategy. It maximizes flow schedulability while minimizing link overlap. By calculating the Worst-Case Delay (WCD) caused by asynchronous collisions, CRT ensures that the resulting jitter remains bounded within the deadline margin. Third, to cope with topology dynamics, CRT introduces a seamless handover mechanism based on implicit backup paths and path continuity, thereby mitigating packet loss during topology changes.
CRT enables deterministic transmission with high schedulability in dynamic LEO environments without requiring precise global synchronization.
The major contributions of this paper are summarized as follows:

\begin{itemize}
	\item \textbf{Deterministic transmission mechanism without global synchronization.}
	We propose CRT, a residence-time-based deterministic transmission framework for LEO satellite networks. By regulating per-hop residence time using local timestamps, CRT compensates for link-induced dynamics without relying on strict global clock synchronization.
	
	\item \textbf{Collision-tolerant scheduling model with bounded jitter.}
	We analyze the impact of asynchronous clock drift on shared-link transmissions and show that collisions among flows from distinct sources are unavoidable in unsynchronized LEO environments. Based on this observation, we formulate the CRT scheduling problem, which maximizes schedulability while minimizing the overlap degree to bound collision-induced jitter under deadline and resource constraints. We further show that this problem is NP-hard.
	
	\item \textbf{Practical heuristic scheduling algorithm.}
	To solve the CRT scheduling problem efficiently, we develop CRT-Fast, a heuristic algorithm that combines iterative layering with path continuity. CRT-Fast controls collision intensity during flow scheduling while improving stability across time slots.
	
	\item \textbf{Comprehensive evaluation on representative LEO constellations.}
	We conduct extensive simulations on Iridium and Starlink constellations. The results show that CRT-Fast achieves better schedulability, lower jitter, stronger path stability under topology dynamics, and good scalability compared with baseline methods.
\end{itemize}

\section{Background}
To facilitate the understanding of the proposed CRT mechanism, we first introduce the basic terminology and the deterministic transmission principles in TT networks.

\begin{figure}[h]
	\centering
	\includegraphics[width=\linewidth]{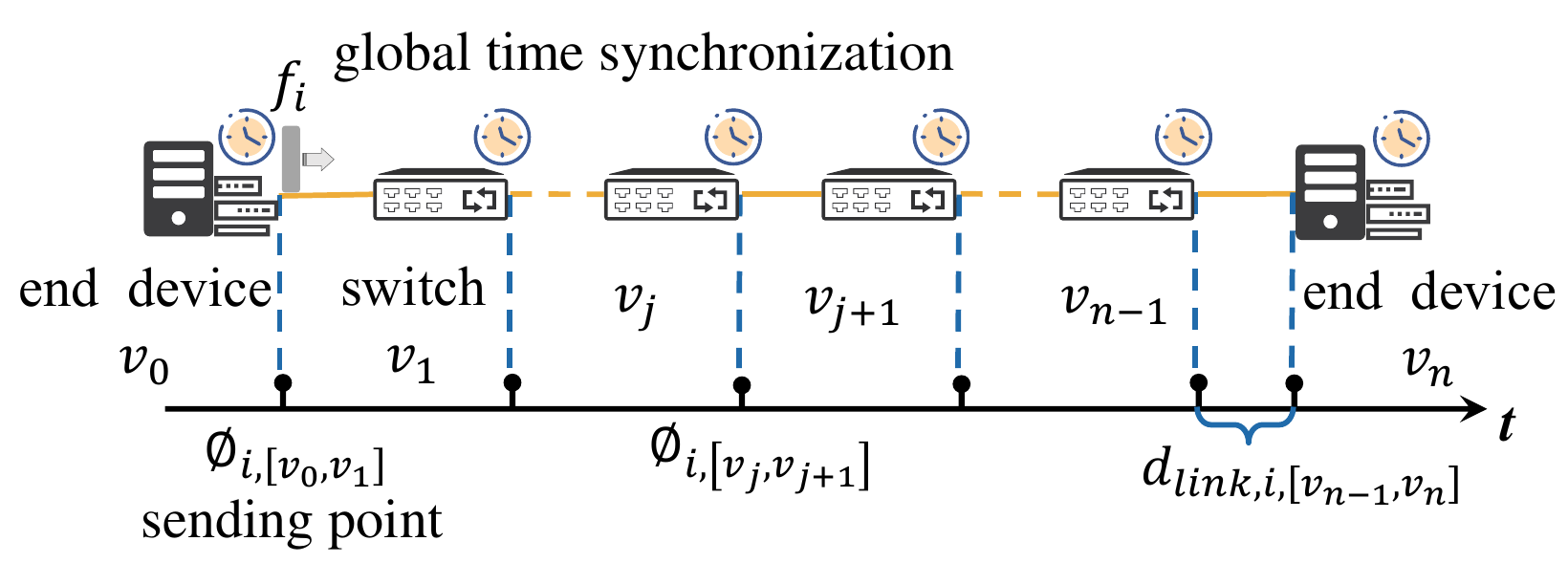}
	\caption{Example of a deterministic data flow path in TT networks.}
	\label{TTNetworkExample}
\end{figure}

\subsection{Network Topology and Dataflow}
The physical topology is represented as an undirected graph $G(V,E)$, where $V$ consists of end systems and switches, and $E$ represents the physical communication links. Each physical link defines two directed dataflow links. We define the set of directed links $\mathcal{L}$ as:

\begin{equation}
	\mathcal{L}=\{[u,v],[v,u]\mid \{u,v\}\in E\}.
\end{equation}

A dataflow path $p$ is defined by a sequence of directed links from a sender to a receiver. For example, as illustrated in Fig. \ref{TTNetworkExample}, the path from sender $v_0$ to receiver $v_n$ is:
\begin{equation}
	p = \langle [v_0, v_1], [v_1, v_2], \ldots, [v_{n-1}, v_n] \rangle.
\end{equation}
A TT flow $f_i$ is a periodic data stream characterized by its period $T_i$ and frame length $L_i$. In a standard synchronized network, the transmission of $f_i$ on a specific link $[u, v]$ is triggered at a precise, pre-scheduled time instant, known as the offset, denoted by $\phi_{i, [u,v]}$.

\subsection{Deterministic Transmission}

TT networks depend on global time synchronization, which establishes a unified temporal reference for deterministic transmission. TT schedulers typically use linear constraint models that incorporate factors such as e2e delay bounds, conflict-free transmission requirements, and buffer constraints. The goal is to determine precise sending instants for TT flows at both end devices and intermediate switches. The computed schedules are compiled into static schedule tables \cite{li2018time}, which are then deployed to the devices. As a result, each device transmits TT flow frames strictly according to the specified time instants. For example, consider the $n$-th departure time of flow $f_i$ from vertex $v_j$ to vertex $v_{j+1}$, which is specified as:
\begin{equation}
	t_{depart} = n \cdot T_i + \phi_{i, [v_j, v_{j+1}]},
\end{equation}
where $T_i$ is the period, and $\phi_{i, [v_j, v_{j+1}]}$ denotes the scheduled offset on the link.

Alongside TT flows, the network accommodates non-TT traffic. IEEE 802.1Qbv \cite{IEEE8021Qbv} introduces guard bands to prevent transmission conflicts. These bands precede each TT transmission. They allow non-TT frames to complete transmission before TT frames begin. However, this guard band is set to the maximum frame length of non-TT traffic, leading to potential bandwidth waste. IEEE 802.1Qbu \cite{IEEE8021Qbu} introduces a preemption mechanism to address this. It allows TT frames to interrupt ongoing non-TT transmissions. This mechanism reduces the guard band to 127 bytes, which corresponds to the maximum non-preemptible frame length. Based on this, we assume that TT frames can be transmitted immediately at their scheduled departure times. There is no delay due to ongoing transmissions.

As a result, the arrival time of flow $f_{i}$ at $v_{n}$ is determined as follows:
\begin{equation}
	Arr_{i, v_{n}} = \phi_{i, [v_{n-1}, v_{n}]} + d_{link,i, [v_{n-1}, v_{n}]},
\end{equation}
where $d_{link,i, [v_{n-1}, v_{n}]}$ denotes the link delay on $[v_{n-1}, v_{n}]$. This delay is dynamically measured using the peer delay mechanism defined in IEEE 1588 \cite{IEEE1588}. Accordingly, the e2e delay of flow $f_{i}$ from $v_{0}$ to $v_{n}$ is given by:
\begin{equation}
	Delay_{i}^{e2e} = Arr_{i, v_{n}} - \phi_{i, [v_{0}, v_{1}]}.
\end{equation}

However, maintaining consistent transmission schedules in TT networks requires system-wide time synchronization. The synchronization accuracy is denoted by $\mu$. It represents the maximum allowable time offset between any two synchronized devices. As a result, the e2e delay jitter of flow $f_{i}$ falls within the range $Delay_{i}^{e2e} \pm \mu$. This characterizes the deterministic latency guarantee offered by TT networks.

While TT-based networks provide high predictability under static, well-synchronized conditions, their underlying assumptions are often violated in the LEO environment. Their reliance on fixed schedule tables, global time alignment, and rigid transmission instants makes them ill-suited for topologies characterized by frequent connectivity shifts and uncertain propagation delays. These constraints emphasize the urgent need for a more resilient and adaptive scheduling paradigm.

\section{The Residence Time Mechanism and Collision Challenges}
\label{sec:mechanism}
\subsection{Residence Time Mechanism}
To address the challenges of dynamic topologies and asynchrony, we propose the Residence Time mechanism. As shown in Fig. \ref{ReTi_Transmission}, the proposed mechanism differs from traditional methods relying on the global time synchronization, it
stabilizes the e2e delay by regulating the per-hop residence
time according to local clocks. Let $\Delta t_{i,v}$ denote the allocated residence time of flow $f_i$ at node $v$ before it is transmitted to the next hop. 

\begin{figure*}[htbp]
	\centering
	\includegraphics[width=0.78\textwidth]{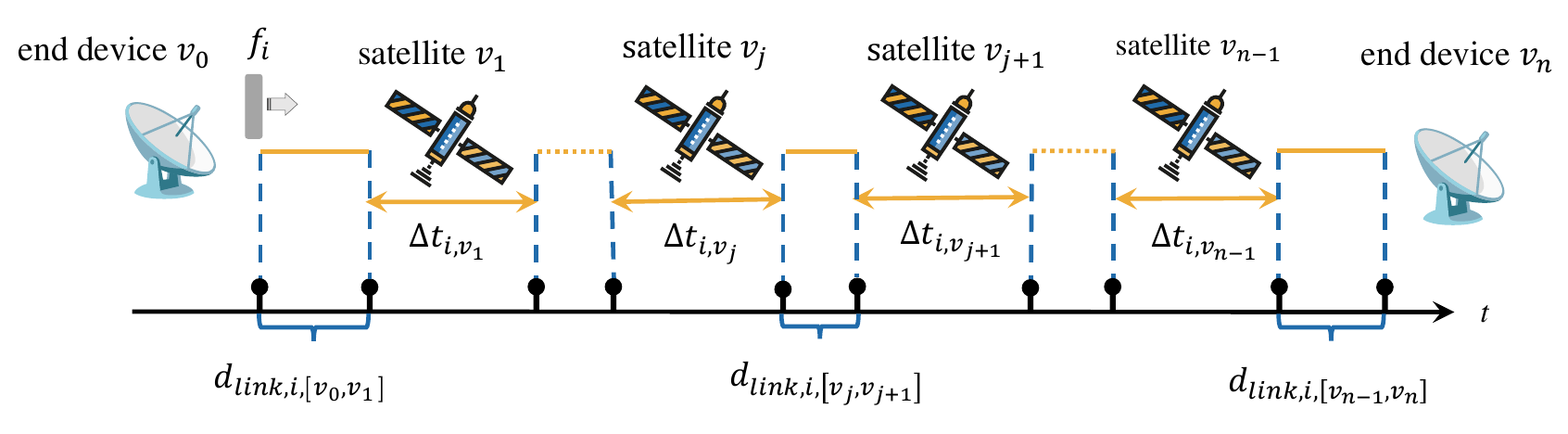}
	\caption{The Residence-Time Mechanism.}
	\label{ReTi_Transmission}
\end{figure*}

Under this mechanism, the e2e delay of flow $f_i$ from sender $v_0$ to receiver $v_n$ is defined as:
\begin{equation}
	Delay_{i}^{e2e} = \sum_{j=0}^{n-1} d_{link, i, [v_j, v_{j+1}]} + \sum_{j=1}^{n-1} \Delta t_{i, v_j}.
\end{equation}

Specifically, when a data frame of flow $f_i$ reaches vertex $v_j$ from $v_{j-1}$, the vertex records its arrival timestamp $Arr_{i,v_j}$ using the local clock and schedules it for forwarding at $Arr_{i,v_j} + \Delta t_{i,v_j}$.
Since this mechanism triggers transmission based on local time rather than the global time synchronization, it functions as a local time-triggered system. Consequently, it remains fully compatible with the guard band strategy of IEEE 802.1Qbv and the preemption strategy of IEEE 802.1Qbu. By utilizing these standards, the switch ensures that TT flows are protected from interference by Non-TT traffic (e.g., Best-Effort), maintaining isolation without global synchronization.

\subsection{Collision Challenges and the CRT Scheduling Problem}

While the residence-time mechanism can stabilize the  e2e delay, it cannot by itself eliminate contention among TT flows. In an asynchronous LEO environment, TT packets may still be blocked by other TT packets on shared links. This is because, unlike the globally synchronized case where the relative transmission offsets between flows remain fixed, different source nodes operate in different local clock domains. When flows from different sources traverse a common link, relative clock drift gradually shifts their transmission windows over time. As a result, even flows that are initially non-overlapping may eventually interfere with each other. This challenge is formalized in the following theorem.

\begin{theorem}[\textbf{Flow Drift}]
	\label{FlowDrift}
	Consider two periodic flows $f_i, f_j \in \mathcal{F}$ originating from different source nodes and sharing a common link $[v_k, v_l]$. Let their initial transmission offsets on this link be $\phi_i$ and $\phi_j$, respectively, and let the relative clock drift rate between their source clock domains be $s>0$. If no explicit synchronization is enforced, then the relative phase difference between the two flows varies over time. Consequently, there exists a finite time after which their transmission windows overlap, causing a collision.
\end{theorem}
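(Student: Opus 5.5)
We model each flow's $n$-th transmission window on the shared link $[v_k,v_l]$ in a common reference time. Since the two sources run in different clock domains with relative drift rate $s>0$, a local period of $T$ in one domain corresponds to $(1+s)T$ in the other. We therefore take the $n$-th window of $f_i$ to start at $t_i(n)=\phi_i+nT$ and the $m$-th window of $f_j$ to start at $t_j(m)=\phi_j+m(1+s)T$, measured in the clock of $f_i$'s source. The windows have lengths $L_i/R$ and $L_j/R$, where $R$ is the link rate.

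To keep the phase argument simple, we first treat the case $T_i=T_j=T$. For general periods we instead work over the hyperperiod, i.e.\ the least common multiple of $T_i$ and $T_j$, and compare the corresponding repeating patterns; the same argument then applies to the closest pair of windows in each hyperperiod. The residence times $\Delta t$ and the link delays up to $v_k$ add a constant offset per flow, or one that is bounded and slowly varying. We absorb this offset into $\phi_i,\phi_j$. This step is legitimate only if the delay variation does not cancel the drift indefinitely, which is an assumption the theorem implicitly needs.

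The key quantity is the relative phase of the $n$-th pair of windows,
\begin{equation*}
\theta(n)=\bigl(t_j(n)-t_i(n)\bigr) \bmod T=\bigl(\phi_j-\phi_i+nsT\bigr)\bmod T .
\end{equation*}
Each period this phase advances by the fixed amount $sT>0$ (taken mod $T$), so it is not constant; this is the first claim of the theorem. Two windows collide exactly when $\theta(n)$ lies in the interval $(-L_j/R,\,L_i/R)$ mod $T$. That interval has length $w=(L_i+L_j)/R>0$. The step that needs care is showing that the sequence $\theta(n)$ must eventually hit this interval.

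To do this, suppose first that $sT<w$. The sequence $\theta(n)$ moves around the circle $\mathbb{R}/T\mathbb{Z}$ with steps strictly shorter than the collision interval. It therefore cannot jump over the interval, and it must land in it within at most $\lceil T/(sT)\rceil=\lceil 1/s\rceil$ periods. This gives the explicit bound $t^\ast\le T\lceil 1/s\rceil$ in reference time.

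For physical clock drifts, which are on the order of ppm, the condition $sT<w$ holds. If one wants the result without it, the argument splits into cases. When $s$ is irrational, Weyl equidistribution (or simply density of $\{ns\}$ mod 1) gives a finite hitting time. When $s$ is rational, collision is not guaranteed, and the statement needs the small-drift assumption. I would state this small-drift condition explicitly as the hypothesis under which the theorem holds, and use the bound $T\lceil 1/s\rceil$ as the finite time whose existence the theorem asserts.
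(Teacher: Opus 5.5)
Your argument is correct under the small-drift hypothesis you make explicit, but it takes a different route from the paper. The paper uses a single real-valued, continuous-time phase difference $\Delta\phi(t)=|\Delta\phi_0-st|$ with $\Delta\phi_0=|\phi_i-\phi_j|$. It calls it a collision when $\Delta\phi(t)<C$, with $C$ the frame transmission time, and solves for $t$. There is no reduction modulo the period and no sampling at frame instants, which is why the paper gets a collision for every $s>0$ without extra conditions. The model does, however, implicitly assume the drift shrinks the gap. If the drift has the other sign, the unreduced difference only grows, and one needs exactly your wrap-around on $\mathbb{R}/T\mathbb{Z}$ to recover the claim. Also, taken literally, the paper's inequality gives only the bounded window $(\Delta\phi_0-C)/s<t<(\Delta\phi_0+C)/s$, not the half-line it states. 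Your circle model handles either drift sign and respects the fact that the relative offset moves in jumps of $sT$ once per period. It also yields an explicit bound of about $\lceil 1/s\rceil$ periods and shows that collisions recur. The price is the hypothesis $sT<w$ (or a non-resonant $s$), and you are right that it cannot simply be dropped in a discrete model. For $s=1$ the period of $f_j$ is exactly $2T$, the phase modulo $T$ is frozen, and an initially collision-free pair never collides. So your remark that $\theta(n)$ ``is not constant'' also needs $s\notin\mathbb{Z}$, which small drift guarantees. In the paper's setting ($T=10$ ms, $w\approx 240\,\mu\mathrm{s}$, ppm-level drift) this hypothesis is harmless. One point to tighten concerns unequal periods. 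Apply your argument to the subsequence of $f_j$ windows at a fixed position within the hyperperiod $H=\mathrm{lcm}(T_i,T_j)$. That subsequence advances by $sH$ per hyperperiod on the circle of circumference $T_i$, so the small-drift condition there becomes $sH<w$ rather than $sT<w$.
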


\begin{proof}
	Let the initial phase difference be
	\[
	\Delta \phi_0 = |\phi_i - \phi_j|.
	\]
	Due to the relative clock drift $s$, the phase difference evolves over time as
	\[
	\Delta \phi(t) = |\Delta \phi_0 - s t|.
	\]
	Let $C$ denote the transmission duration of a TT frame on the shared link. A collision occurs when the phase difference becomes smaller than the transmission window, i.e.,
	\[
	\Delta \phi(t) < C.
	\]
	Solving this inequality gives
	\[
	t > \frac{\Delta \phi_0 - C}{s}.
	\]
	Therefore, for any non-zero relative clock drift $s>0$, there exists a finite time $t$ such that the transmission windows of $f_i$ and $f_j$ overlap, which causes a collision.
\end{proof}

Collisions among TT flows introduce additional queuing jitter, which can undermine deterministic transmission guarantees. To address this issue, CRT adopts a collision-tolerant scheduling strategy. Since flows generated by the same source are serialized at the ingress, effective collisions only occur among flows originating from different source nodes. We formulate an optimization problem that maximizes schedulability while minimizing the maximum overlap degree, thereby bounding the resulting jitter. Based on this idea, we define the CRT scheduling problem as follows.

\begin{definition}[\textbf{CRT Scheduling Problem}]
	\label{def:crt_problem}
	Given a time-varying network topology sequence $\mathcal{G}$ and a set of time-sensitive flows $\mathcal{F}$, the CRT scheduling problem is to determine a scheduled subset of flows $\mathcal{F}' \subseteq \mathcal{F}$, together with their routing paths $P$ and per-hop residence times $\Delta t$, such that:
	\begin{enumerate}
		\item the number of scheduled flows $|\mathcal{F}'|$ is maximized;
		\item subject to (1), the maximum overlap degree is minimized, where the overlap degree of a link counts only flows originating from distinct source nodes;
		\item all scheduled flows satisfy the e2e deadline, residence-time, and resource constraints.
	\end{enumerate}
\end{definition}

\begin{theorem}
	\label{thm:np_hard}
	The CRT scheduling problem in Definition~\ref{def:crt_problem} is NP-hard.
\end{theorem}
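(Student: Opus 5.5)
We reduce from a known NP-hard problem by restricting the CRT scheduling problem to a special case. Since the objective's first priority is to maximize $|\mathcal{F}'|$ under resource and deadline constraints, the natural source is the \emph{maximum edge-disjoint paths} problem, or equivalently the decision version of \emph{unsplittable multicommodity flow} with unit capacities. Both are NP-complete even on static directed graphs. Using the decision form is cleanest: ``can all $K$ flows be scheduled with maximum overlap degree at most $1$?''

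First, we restrict to a trivial topology sequence $\mathcal{G}$ consisting of a single snapshot $G(V,E)$. This removes the time-varying aspect and handover. Given an instance $(D, \{(s_k,t_k)\}_{k=1}^K)$ of edge-disjoint paths on a directed graph $D$, we build $G$ so that its directed link set $\mathcal{L}$ contains the arcs of $D$. If the undirected modelling forces reverse links to exist, we use the standard gadget that makes reverse directions useless, for example by making their delay exceed the deadline. For each pair we create one flow $f_k$ whose source is a dedicated end system attached to $s_k$, so that all sources are distinct. This makes the overlap degree of a link equal to the number of flows crossing it.

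Next, we make the deadline, residence-time and resource constraints non-binding or aligned with the reduction. We set all link delays to a constant and the residence times to a fixed minimum. The deadline is loose enough that any simple path is admissible, or it is set to $|V|$ hops. The resource constraint is chosen so that each link can carry at most one TT flow, for example by making the frame length $L_k$ equal to the link capacity per period $T_k$. All scheduled flows must then use pairwise link-disjoint paths. Under these choices, $|\mathcal{F}'| = K$ is achievable if and only if $D$ has $K$ arc-disjoint $s_k$--$t_k$ paths. The optimum of the CRT problem therefore answers the edge-disjoint paths instance. Alternatively, we drop the capacity restriction and use only criterion~(2): all $K$ flows are routable, and the minimum maximum overlap degree equals $1$ exactly when disjoint paths exist. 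The construction is polynomial in the input size, which gives NP-hardness.

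The main obstacle is aligning the paper's loosely specified ``resource constraints'' and the undirected-to-directed link model with the reduction. We need to check that the gadgets preventing use of reverse links, or the per-link capacity encoding, neither create spurious feasible schedules nor destroy genuine ones. We would try the overlap-degree version first, since it depends only on criterion~(2) and on source distinctness, which the definition states explicitly. To avoid the directed/undirected subtlety, we would reduce from edge-disjoint paths in undirected graphs, which is also NP-complete. This works because two flows using opposite directions of one physical link occupy distinct directed links, and the gadget handles that case by subdividing each edge through a degree-two switch pair.
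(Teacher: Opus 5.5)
Your overlap-degree version is the paper's proof. You reduce from directed EDP, use a single static snapshot equal to the EDP graph, and attach a private source to each $s_k$, so all flows have distinct sources and the overlap degree of a link is just the number of flows on it. You then relax the deadline, residence-time and resource constraints, and ask whether all $|\mathcal{F}|$ flows fit with maximum overlap degree at most $1$.

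Your bandwidth-saturation variant ($L_k/T_k=B_e$, so each directed link carries at most one flow) is also sound, and it is not in the paper. It gives a slightly stronger conclusion: the first-priority objective of maximizing $|\mathcal{F}'|$ is already NP-hard on its own. In the paper's relaxed instance, by contrast, every flow that has some path is admissible, so all the hardness sits in criterion~(2).

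Drop the fallback in your last paragraph, though, because it would fail. The fact you cite, that two flows crossing a physical link in opposite directions use distinct directed links, is exactly why a reduction from undirected EDP breaks. Subdividing $\{u,v\}$ into $u$--$a$--$b$--$v$ does not help, since $[a,b]$ and $[b,a]$ are again distinct links.

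Concretely, take the undirected instance with the single edge $\{u,v\}$ and pairs $(u,v)$, $(v,u)$. It is a no-instance of undirected EDP. After subdivision, the routes $u\to a\to b\to v$ and $v\to b\to a\to u$ share no directed link, so the CRT instance is a yes-instance. The same happens with any gadget whose two link directions carry identical parameters, because a simple route and its reverse are always directed-link-disjoint. Making that detour work would require breaking the symmetry between the two directions anyway, so it gains nothing.

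You do not need it. The paper simply takes the directed EDP graph as the snapshot, which is consistent with the directed $G^\tau=(V,\mathcal{L}^\tau)$ of the scheduling model. If you insist on the convention that every physical link yields both directed links, your first gadget already works. Since $d_{prop,e}^\tau$ is indexed by directed link, you can give the unwanted reverse directions a delay exceeding the deadline.
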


\begin{proof}
	We prove this by reduction from the Edge-Disjoint Paths (EDP) problem~\cite{vygen1995np}, which is NP-complete in directed graphs.
	
	Consider the following decision problem of CRT scheduling: given a topology sequence $\mathcal{G}$, a flow set $\mathcal{F}$, an integer threshold $Q$, and an overlap bound $K$, determine whether there exists a feasible schedule such that at least $Q$ flows are scheduled and the maximum overlap degree is at most $K$.
	
	We reduce an arbitrary EDP instance to a restricted static instance of this decision problem. Given a directed graph $G=(V,E)$ and a set of source-destination pairs $(s_i,d_i)$, we construct a static CRT instance as follows:
	\begin{enumerate}
		\item Topology: Use a single static snapshot identical to $G$.
		\item Flows: For each source-destination pair $(s_i,d_i)$ in the EDP instance, we introduce a private virtual source node $\hat{s}_i$ and connect it to the original source node $s_i$ by a dedicated edge $(\hat{s}_i,s_i)$ that is not shared by any other flow. We then define the corresponding flow $f_i$ to start from $\hat{s}_i$ and terminate at $d_i$.
		
		\item Relaxation: Set link capacities, deadlines, buffer limits, and residence-time bounds sufficiently large so that they never restrict feasibility.
		\item Decision parameters: Set $Q = |\mathcal{F}|$ and $K = 1$.
	\end{enumerate}
	
	Then, there exists a feasible CRT schedule admitting at least $Q$ flows with maximum overlap degree at most $K=1$ if and only if there exists a set of pairwise edge-disjoint paths for all source-destination pairs in the EDP instance. Indeed, admitting at least $Q=|\mathcal{F}|$ flows means all flows must be scheduled. Since the maximum overlap degree is at most $1$, no two scheduled flows can share any link, thus the resulting paths are pairwise edge-disjoint. Conversely, any solution to the EDP instance directly yields a feasible CRT schedule for all flows with maximum overlap degree at most $1$.
	
	Therefore, this restricted decision version of CRT scheduling is NP-hard. Since the optimization problem in Definition~\ref{def:crt_problem} is at least as hard as this restricted version, the CRT scheduling problem is NP-hard.
\end{proof}

\section{Scheduling for CRT Deterministic Transmission}
\subsection{Scheduling Model}

\label{Problem Formulation}

\begin{figure}[htbp]
	\centering
	\includegraphics[width=0.38\textwidth]{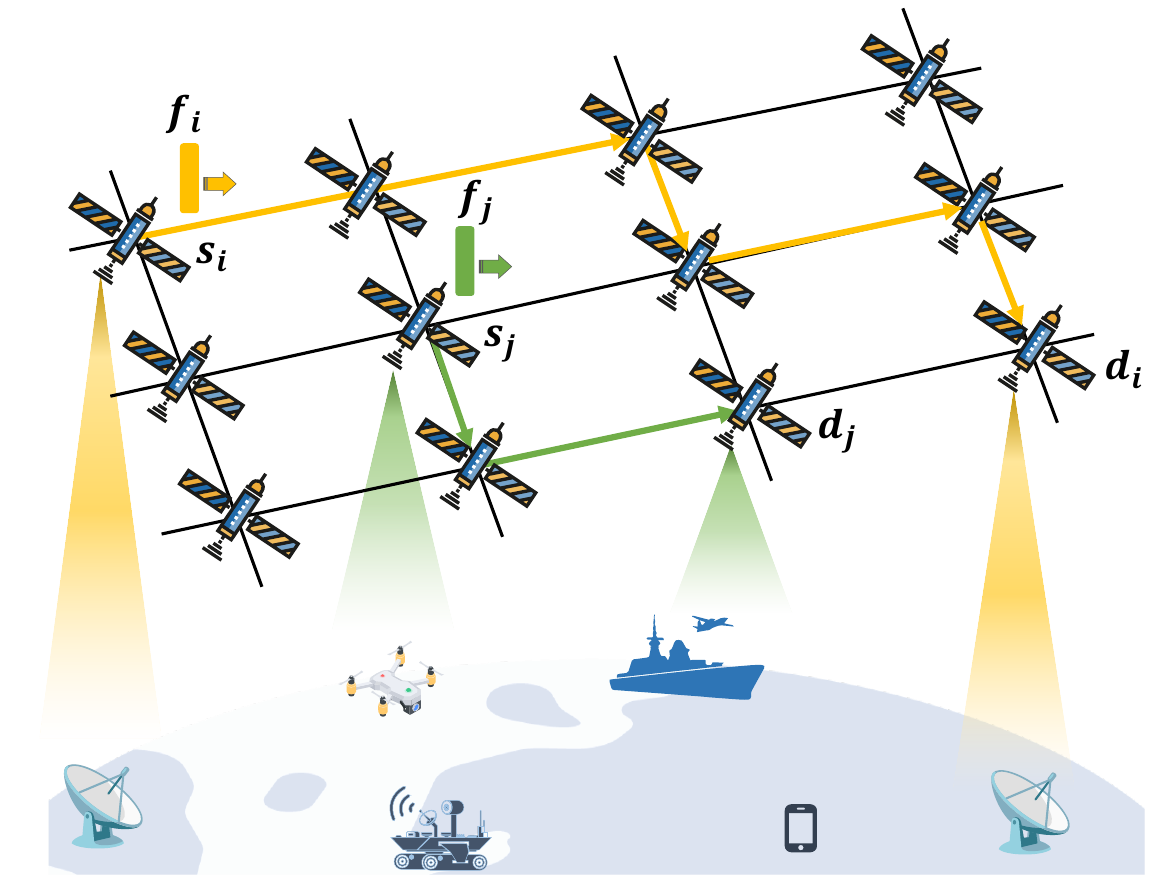}
	\caption{TT scheduling model in LEO satellite network.}
	\label{network model}
\end{figure}

As depicted in Fig. \ref{network model}, the network architecture comprises access devices and a LEO satellite system. The terminal devices handle the sending and receiving of each TT flow, while the LEO satellite network manages the scheduling process. A summary of the important notations and definitions used throughout the paper is provided in Table \ref{tab:notation}, aiding in the comprehension of the proposed models and algorithms.

\begin{table}[htbp]
	\centering
	\caption{Notations and Definitions}
	\begin{tabular}{>{\centering\arraybackslash}p{1.9cm}p{6cm}}
		\toprule
		\textbf{Notation} & \textbf{Description} \\ 
		$\mathcal{T}$ & The set of discrete time slots \\ 
		$\tau$ & The index of a time slot \\
		$G^\tau$ & The network topology graph in time slot $\tau$ \\
		$\mathcal{L}^\tau$ & The set of directed ISLs in time slot $\tau$ \\
		$B_e$ & The bandwidth capacity of link $e$ \\ 
		$d_{prop,e}^\tau$ & The propagation delay of link $e$ in time slot $\tau$ \\ 
		$d_{proc, v}$ & The fixed processing delay at node $v$ \\ 
		
		$\mathcal{F}$ & The set of TT flows \\ 
		$f_i$ & The $i$-th TT flow \\ 
		$T_i$ & The transmission period of $f_i$ \\ 
		$L_i$ & The frame length of $f_i$ \\ 
		$D_i$ & The e2e deadline of $f_i$ \\ 
		$C_{i,e}$ & The transmission time of $f_i$ on link $e$ ($L_i/B_e$) \\
		$C_{max,e}$ & The maximum transmission time of a TT frame on link $e$ \\
		
		$p_i^\tau$ & The routing path of flow $f_i$ in time slot $\tau$ \\ 
		$n_e^\tau$ & The overlap degree on link $e$ \\ 
		$WCD_e^\tau(n_e^\tau)$ & The worst-case delay caused by collisions on link $e$ \\
		$\Delta t_{i,v}^\tau$ & The allocated residence time of $f_i$ at node $v$ \\ 
		$t_{extra}$ & The dynamic regulatory time to absorb jitter \\
		$D_i^{target}$ & The scheduled deterministic e2e latency \\ 
		
		$y_i$ & Binary schedule variable ($1$ if $f_i$ is scheduled) \\ 
		$x_{i,p}^\tau$ & Binary path variable ($1$ if $f_i$ uses path $p$) \\ 
		$\delta_{e,p}$ & Binary indicator ($1$ if link $e$ is on path $p$) \\
		\bottomrule
	\end{tabular}
	\label{tab:notation}
\end{table}

We consider a LEO satellite constellation consisting of $M$ orbital planes with $N$ satellites per plane. The network connectivity relies on ISLs. Due to the periodic nature of satellite orbits, the dynamic topology is discretized into a sequence of time slots $\mathcal{T}=\{1,\dots,m\}$. Within each time slot $\tau$, the topology is modeled as a static directed graph $G^\tau=(V, \mathcal{L}^\tau)$, where $V$ denotes the set of satellites and $\mathcal{L}^\tau$ denotes the set of ISLs.

The network transmits TT flows requiring strictly deterministic service. We formally define a TT flow $f_i \in \mathcal{F}$ as a tuple:
\begin{equation}
	f_i = \{ T_i, L_i, s_i, d_i, D_i \},
\end{equation}
where $T_i$ is the transmission period, $L_i$ is the frame length, $D_i$ is the maximum allowable e2e deadline. $s_i$ and $d_i$ represent the source and destination satellites corresponding to the sender and receiver terminal devices. The packet transmission time on a link $e$ with bandwidth $B_e$ is denoted as $C_{i,e} = L_i / B_e$. 

Given the time-varying topology, the routing path of flow $f_i$ in time slot $\tau$ is denoted as $p_i^\tau$. It consists of a sequence of links:
\begin{equation}
	p_i^\tau = \left[ [v_0, v_1], [v_1, v_2], \ldots, [v_{n-1}, v_{n}] \right].
\end{equation}

The physical delay on an ISL consists of transmission delay and propagation delay. For flow $f_i$ on link $e$, the link delay is:
\begin{equation}
	d_{link,i,e}^\tau = d_{prop,e}^\tau + C_{i,e}.
\end{equation}

The e2e latency assigned by the residence-time mechanism is:
\begin{equation}
	\label{det}
	\sum_{e \in p_i^\tau} d_{link,i,e}^\tau +\sum_{v \in p_i^\tau} \Delta t_{i,v}^\tau = D_i^{target},\quad \forall f_i\in\mathcal{F},\ \forall \tau\in\mathcal{T}.
\end{equation}

This equality defines the e2e delay $D_i^{target}$
before accounting for collision-induced jitter.

The residence time $\Delta t_{i,v}^\tau$ at node $v$ is the control variable:
\begin{equation}
	\Delta t_{i,v}^\tau = d_{proc, v} + t_{extra,i,v}^\tau ,
	\label{eq:residence_def}
\end{equation}
where $d_{proc, v}$ is the processing delay of node $v$,
it mainly results from analyzing the frame header and performing frame verification, which is considered a fixed constant.
$t_{extra,i,v}^\tau$ is the additional residence time required by the scheduler at node $v$. This residence time is dynamically adjusted by the scheduler to stabilize the e2e delay and compensate for variations in link delays across time slots.

\subsection{Collision-Induced Jitter Analysis}
\label{subsec:wcd_analysis}

In a network-wide traffic scenario without reliance on global time synchronization, two main challenges arise: conflicts between TT and non-TT traffic, and conflicts among TT traffic flows themselves. 
The first issue is addressed through the residence-time mechanism combined with standard TSN features. Specifically, TT packets avoid blocking from non-TT packets via IEEE 802.1Qbu frame preemption or IEEE 802.1Qbv guard-band mechanisms. 

However, blocking from other TT packets persists. In an asynchronous environment, data flows originating from different sources operate in independent time domains. As these flows converge on a shared inter-satellite link $e$, their relative arrival times drift. Consequently, packets from different flows may arrive simultaneously, leading to resource contention.
We define the overlap degree $n_e^\tau$ as the number of scheduled flows
originating from distinct source nodes and traversing link $e$ in time slot $\tau$.

\begin{equation}
	\label{eq:overlap_degree}
	n_e^\tau=
	\left|
	\left\{
	s_i\;\middle|\; f_i\in \mathcal{F},\ e\in p_i^\tau,\ f_i\ \text{is scheduled in slot}\ \tau
	\right\}
	\right|.
\end{equation}

\subsubsection{Overlap and Network Calculus Modeling}
To rigorously quantify the physical jitter caused by these collisions, we must identify the effective interference. 
Crucially, TT flows originating from the same source node are inherently serialized at the ingress interface and do not collide with each other. Collisions only occur between flows from different sources.
For a target flow $f_i$ on link $e$, the set of effective competing flows is denoted as $\mathcal{O}_{i,e}^\tau = \{ f_j \mid e \in p_j^\tau, s_j \neq s_i \}$. The number of effective interfering flows is $k_{i,e}^\tau = |\mathcal{O}_{i,e}^\tau|$.

We employ network calculus to derive the WCD. In the worst-case asynchronous scenario, packets from all flows in $\mathcal{O}_{i,e}^\tau$ may arrive simultaneously at the switch egress port.
The aggregate arrival curve $\alpha(t)$ for these interfering flows is bounded by their collective burst size:
\begin{equation}
	\alpha(t) = \sum_{f_j \in \mathcal{O}_{i,e}^\tau} L_j + \rho_{agg} \cdot t,
\end{equation}
where $L_j$ is the packet length and $\rho_{agg} = \sum_{f_j \in \mathcal{O}_{i,e}^\tau} (L_j/T_j)$ represents the aggregate average arrival rate. Since TT flows are served strictly by the physical link bandwidth $B_e$, the service curve $\beta(t)$ is linear:
\begin{equation}
	\beta(t) = B_e \cdot t.
\end{equation}

\subsubsection{Jitter Bound Derivation}
The delay bound is defined as the maximum horizontal deviation between $\alpha(t)$ and $\beta(t)$. For a packet of flow $f_i$, the WCD occurs when it is processed last among the contending burst. We define the link-level worst-case jitter on link $e$ in time slot $\tau$, denoted as $WCD_e^\tau(n_e^\tau)$, as:
\begin{equation}
	\label{eq:wcd_link_level}
	WCD_e^\tau(n_e^\tau) = \frac{\sum_{f_j \in \mathcal{O}_{i,e}^\tau} L_j}{B_e} = \sum_{f_j \in \mathcal{O}_{i,e}^\tau} C_{j,e},
\end{equation}
where $C_{j,e} = L_j / B_e$ is the transmission time. To ensure a strictly deterministic upper bound independent of specific flow lengths in the set, we simplify this as:
\begin{equation}
	\label{eq:wcd_formula_final}
	WCD_e^\tau(n_e^\tau) \le (n_e^\tau - 1) \cdot C_{max,e},
\end{equation}
where $n_e^\tau$ is the overlap degree, i.e., the number of distinct-source
flows on link $e$, and $C_{max,e}$ is the maximum possible transmission
time of any TT frame on link $e$.

This $WCD_e^\tau(n_e^\tau)$ represents the unavoidable collision-induced jitter contributed by by link $e$ in slot $\tau$. Although this jitter is not part
of the target e2e delay enforced by the residence-time mechanism, it consumes
the available deadline margin. Therefore, the cumulative jitter along routing
path $p_i^\tau$ must satisfy
$\sum_{e \in p_i^\tau} WCD_e^\tau(n_e^\tau) \le D_i - D_i^{target}$.
Consequently, minimizing the overlap degree $n_e^\tau$ is critical for deterministic transmission.

\subsection{Optimization Framework}
The optimization goal is to maximize the schedulability while minimizing the overlap degree, so as to suppress collision-induced jitter.

We define two sets of binary decision variables:

\begin{equation}
	x_{i,p}^\tau = 
	\begin{cases} 
		1, & \text{if flow } f_i \text{ uses path } p \in \mathcal{P}_i^\tau \text{ in slot } \tau, \\
		0, & \text{otherwise}.
	\end{cases}
\end{equation}

\begin{equation}
	y_i = 
	\begin{cases} 
		1, & \text{if flow } f_i \text{ is  scheduled}, \\
		0, & \text{if flow } f_i \text{ is unscheduled}.
	\end{cases}
\end{equation}

Let $\delta_{e,p}$ be a binary parameter, where $\delta_{e,p}=1$ if link $e$
belongs to path $p$, and 0 otherwise. Let $\mathcal{V}_{src}$ denote the set
of source nodes associated with the flow set $\mathcal{F}$. To express the
source-level overlap degree in terms of the path selection variables, we
introduce an auxiliary binary variable
\begin{equation}
	\label{eq:source_indicator}
	u_{s,e}^\tau =
	\begin{cases}
		1, & \text{if source } s \text{ occupies link } e \text{ in slot } \tau,\\
		0, & \text{otherwise,}
	\end{cases}
\end{equation}
for all $s \in \mathcal{V}_{src}$, $e \in \mathcal{L}^\tau$, and
$\tau \in \mathcal{T}$.
Here, source $s$ occupies link $e$ if at least one scheduled flow
originating from $s$ traverses $e$ in slot $\tau$.

Then, the overlap degree on link $e$ in slot $\tau$ can be written as
\begin{equation}
	\label{eq:source_overlap_ilp}
	n_e^\tau = \sum_{s \in \mathcal{V}_{src}} u_{s,e}^\tau,
	\quad \forall e \in \mathcal{L}^\tau,\ \forall \tau \in \mathcal{T}.
\end{equation}

To link $u_{s,e}^\tau$ with the path selection variables, we impose
\begin{equation}
	\label{eq:source_overlap_link}
	\begin{aligned}
		u_{s_i,e}^\tau &\ge x_{i,p}^\tau \cdot \delta_{e,p},\\
		&\forall f_i \in \mathcal{F},\ \forall p \in \mathcal{P}_i^\tau,\ 
		\forall e \in \mathcal{L}^\tau,\ \forall \tau \in \mathcal{T}.
	\end{aligned}
\end{equation}
\subsubsection{Objective Function}
\label{subsec:objective}

we formulate a hierarchical optimization problem with priority order $P_1 \succ P_2$.

\paragraph{Priority 1 ($P_1$): Maximize Schedulability}
Our primary goal is to maximize the total number of scheduled TT flows. The objective function is:

\begin{equation}
	\label{eq:obj_P1}
	\max \ \mathcal{J}_1 \triangleq \sum_{f_i \in \mathcal{F}} y_i.
\end{equation}

\paragraph{Priority 2 ($P_2$): Minimize Collision Intensity}
Subject to maintaining the optimal schedulability $\mathcal{J}_1^\star$, denoting the optimal value of Eq. \eqref{eq:obj_P1}, we minimize the maximum overlap degree across all links. 
To linearize the Min-Max objective ($\min \max n_e^\tau$), we introduce an auxiliary integer variable $z$. The second-stage objective is
\begin{equation}
	\label{eq:obj_P2}
	\min \ \mathcal{J}_2 \triangleq z,
\end{equation}
subject to
\begin{equation}
	\label{eq:z_def}
	n_e^\tau \le z, \quad \forall \tau \in \mathcal{T},\ \forall e \in \mathcal{L}^\tau,
\end{equation}
and
\begin{equation}
	\label{eq:fix_P1_opt}
	\sum_{f_i \in \mathcal{F}} y_i = \mathcal{J}_1^\star .
\end{equation}

In summary, the overall objective can be expressed as a lexicographic optimization:
\begin{equation}
	\label{eq:lex_obj}
	\text{lex} \ \max_{x,y} \big[ \mathcal{J}_1(y),\ -\mathcal{J}_2(x) \big].
\end{equation}

\subsubsection{Maximum Delay Constraint}
Since the target e2e delay $D_i^{target}$ serves as a baseline and does not
account for asynchronous collisions, a safety margin must be reserved for
collision-induced jitter. Therefore, the sum of the baseline delay and the
cumulative WCD along the path must not exceed the deadline:
\begin{equation}
	\label{eq:max_delay_constraint}
	D_i^{target} + \sum_{e \in p_i^\tau} WCD_e^\tau(n_e^\tau)
	\le D_i,\quad \forall f_i \in \mathcal{F},\ \forall \tau \in \mathcal{T}.
\end{equation}
Here, $\sum_{e \in p_i^\tau} WCD_e^\tau(n_e^\tau)$ represents the maximum collision-induced jitter caused by link overlaps along the path.

\subsubsection{Baseline Latency Consistency Constraint}

The residence-time mechanism assigns a baseline e2e latency $D_i^{target}$ for each scheduled flow. Eq.~(\ref{det}) enforces this baseline delay, while the actual packet delay may additionally include bounded collision-induced jitter.

\subsubsection{Path Uniqueness Constraint} To ensure routing consistency, each scheduled flow is assigned exactly one path per time slot:
\begin{equation}
	\sum_{p \in \mathcal{P}_i^\tau} x_{i,p}^\tau = y_i, \quad \forall f_i \in \mathcal{F}, \forall \tau \in \mathcal{T}.
\end{equation}

\subsubsection{Bandwidth Constraint} The aggregated bandwidth of all scheduled TT flows traversing a specific link must not exceed the link's bandwidth capacity. 
\begin{equation}
	\sum_{i \in \mathcal{F}} \sum_{p \in \mathcal{P}_i^\tau} x_{i,p}^\tau \cdot \delta_{e,p} \cdot \frac{L_i}{T_i} \le B_e, \quad \forall e \in \mathcal{L}^\tau, \forall \tau \in \mathcal{T}.
\end{equation}

\subsubsection{Buffer Capacity Constraint}
Satellites employ a store-and-forward strategy, where insufficient buffer capacity in any time slot $\tau$ inevitably leads to packet loss. Although dedicated
buffer spaces can be allocated for TT traffic to isolate it from
non-TT traffic interference, contention persists among concurrent
TT flows occupying the shared TT-specific buffer space within
each time slot. To prevent buffer overflow, 
we impose an upper bound on the residence time at each hop:
\begin{equation}
	\label{eq:buffer_limit}
	\Delta t_{i,v}^\tau \le T_{buffer}^{max}, \quad \forall f_i \in \mathcal{F}, \forall v \in p_i^\tau,
\end{equation}
where $T_{buffer}^{max}$ represents the maximum duration a packet can be buffered at a switch port.

\subsubsection{Minimum Residence Time Constraint}
Physically, a packet cannot be processed faster than the hardware limit. Therefore, the allocated residence time must be lower-bounded by the processing delay, regardless of whether collisions occur,
\begin{equation}    
	\Delta t_{i, v_j}^\tau \geq d_{proc}, \quad \forall f_i \in \mathcal{F},  \forall v_j \in p_i^\tau.
\end{equation}


\section{Algorithm Design}
\label{sec:algorithm}

\subsection{Heuristic Strategy: Iterative Layering}
Since the formulated CRT scheduling problem is NP-Hard, we propose a heuristic algorithm CRT-Fast.

Unlike simple greedy approaches that schedule flows strictly one-by-one, CRT-Fast adopts a layer-by-layer superposition strategy. The core idea is to decompose the complex global optimization into a sequence of simpler sub-problems. In each layer, we solve a constrained version of the "Maximum Independent Set" problem: finding the maximum subset of remaining flows that can be added to the network without conflicting with each other and without violating the deadlines of already scheduled flows.

This strategy naturally controls the collision intensity:
\begin{itemize}
	\item Layer 1: We select the maximum set of edge-disjoint flows. These form the base layer ($n_e=1, WCD=0$).
	\item Layer $k$: We select the next batch of edge-disjoint flows and superimpose them onto the existing schedule. This increases the overlap degree $n_e$ by at most 1 per layer, ensuring the interference grows gradually and predictably.
	\item Temporal Consistency: When scheduling flow $f_i$ in slot $\tau$, we prioritize reusing the path $p_i^{\tau-1}$ from the previous slot. If $p_i^{\tau-1}$ remains valid in the current topology and satisfies constraints, we select it immediately. This avoids unnecessary path switching during topology handovers.
\end{itemize}

\subsection{Handling Dynamic Topology: Seamless Handover}
\label{subsec:handover}

While the discretized time-slot model ($G^\tau$) effectively handles intra-slot stability, LEO networks face critical challenges during the transition between time slots. As satellites move, ISLs may break, causing packet loss if the flow path is not updated instantaneously.
To mitigate this, we propose a seamless handover strategy based on path lookahead, which acts as a guiding principle for our scheduling algorithm.

\subsubsection{Implicit Backup Path}

\begin{figure*}[htbp]
	\centering
	\includegraphics[width=0.78\textwidth]{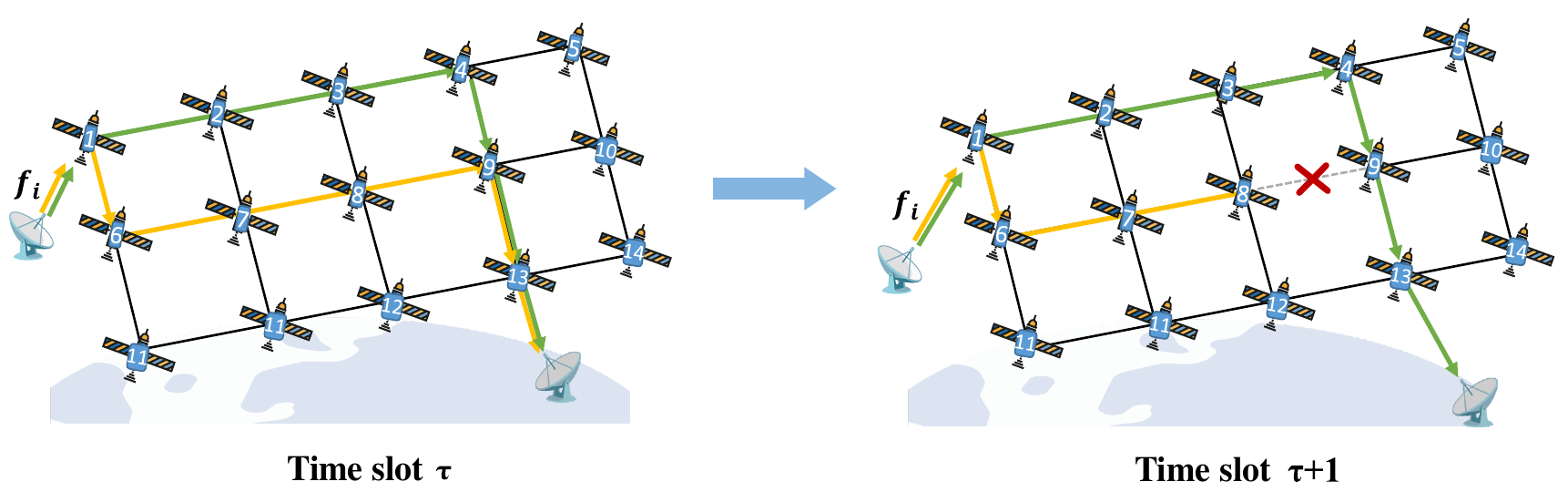}
	\caption{Example of Cross-Time-Slot Transmission.}
	\label{scheduling_example}
\end{figure*}

Unlike traditional fault tolerance schemes that reserve dedicated redundant resources, which would increase collision intensity $n_e^\tau$ and worsen jitter, we adopt an efficient implicit backup approach.
As illustrated in Fig. \ref{scheduling_example}, for a flow $f_i$ in the current time slot $\tau$:
The primary path is the optimal path selected for slot $\tau$. The backup path is essentially the pre-calculated primary path for the next time slot $\tau+1$.
Since the backup path is intended for the future, it does not actively consume bandwidth or contribute to the overlap degree ($n_e^\tau$) in the current slot, thus preserving the low-jitter performance of our collision-tolerant mechanism.

\subsubsection{Switching Mechanism and Path Continuity}
During the transition phase near the end of slot $\tau$, if the primary link becomes unstable or disconnects, the switch immediately redirects packets to the backup path. This ensures continuous data flow.
To support this mechanism efficiently, our scheduling strategy enforces a principle of path continuity. When scheduling for a new time slot $\tau$, the system prioritizes keeping the path unchanged from the previous slot ($p^\tau = p^{\tau-1}$) whenever the topology allows. This reduces the frequency of switching paths and the risk of transient congestion caused by path changes.

\subsection{Algorithm Description}

Main Framework (Alg.~\ref{alg:crt}):
The main algorithm manages the lifecycle of the scheduling process. It maintains a set of unscheduled flows and the global network state (overlap counters $n_e^\tau$).
The process consists of two phases:
\begin{enumerate}
	\item Iterative Superposition: In each iteration, Alg. \ref{alg:crt} calls the subroutine Alg. \ref{alg:findset} to identify the best set of flows for the current layer. If a non-empty set is returned, these flows are committed to the schedule, and the per-link source sets together with the corresponding overlap counters are updated accordingly. This loop continues until no more flows can be scheduled.
	\item Deterministic Parameter Calculation: Once the slot-level paths of a flow are finalized, the post-processing stage computes a common target delay and the corresponding per-hop residence times over its active time slots. Specifically, algorithm first evaluates the path delay in each slot and sets the common baseline target delay to the maximum among them. It then computes the slot-dependent slack as the difference between this common baseline and the current slot's delay, and evenly distributes the slack across all hops on the selected path to obtain the final residence times. CheckGlobalFeasibility validates local path and resource feasibility during path selection, while the exact cross-slot residence-time allocation is finalized in this post-processing stage.
	
\end{enumerate}

\begin{algorithm}[!t]
	\small
	\DontPrintSemicolon
	\SetAlgoLined
	\caption{CRT-Fast algorithm}
	\label{alg:crt}
	\KwIn{$G^\tau$, Flow set $\mathcal{F}$, Candidate path sets $\{\mathcal{P}_i^\tau\}$}
	\KwOut{Scheduled Paths $\mathbb{P}$, Residence Times $\mathbf{\Delta t}$}
	
	\tcc{1. Initialization}
	Initialize $\mathbb{P} \leftarrow \emptyset$, Unscheduled Set $\mathcal{U} \leftarrow \mathcal{F}$\;
	Initialize Global Link Counters $n_e^\tau \leftarrow 0$ for all $e \in \mathcal{L}^\tau$\;
	Initialize $SrcOnLink[e]\gets\emptyset,\forall e\in L^\tau$\;
	
	\tcc{2. Layer-by-Layer Scheduling}
	\While{$\mathcal{U} \neq \emptyset$}{
		$\mathcal{S}_{layer} \leftarrow$ \textbf{FindMaxFeasibleIS}($\mathcal{U},\mathbb{P},n_e^\tau,SrcOnLink,\mathbb{P}^{\tau-1}$)\;
		
		\If{$\mathcal{S}_{layer} == \emptyset$}{
			\textbf{break} \tcp*{Network saturated, stop iteration}
		}
		
		\tcc{Commit Layer and Update Global State}
		\ForEach{$(f_i, p) \in \mathcal{S}_{layer}$}{
			$\mathbb{P}[f_i] \leftarrow p$\;
			Remove $f_i$ from $\mathcal{U}$\;
			\ForEach{$e\in p$}{
				\If{$s_i \notin SrcOnLink[e]$}{
					$SrcOnLink[e]\gets SrcOnLink[e]\cup\{s_i\}$\;
					$n_e^\tau[e]\gets n_e^\tau[e]+1$\;
				}
			}
		}
	}
	
	\tcc{3. Post-Processing: Cross-Slot Residence-Time Allocation}
	\ForEach{$f_i \in \mathbb{P}$}{
		\tcp{Compute the path delay in each slot}
		\ForEach{$\tau \in \mathcal{T}_i$}{
			$p \leftarrow \mathbb{P}[f_i,\tau]$\;
			$D_{fixed,i}^\tau \leftarrow \sum_{e \in p} d_{link,i,e}^\tau + \sum_{v \in p} d_{proc,v}$\;
			$WCD_{total,i}^\tau \leftarrow \sum_{e \in p} (n_e^\tau-1)C_{max,e}$\;
		}
		
		\tcp{Common baseline delay over all slots}
		$D_i^{target} \leftarrow \max_{\tau \in \mathcal{T}_i} D_{fixed,i}^\tau$\;
		
		\tcp{Allocate slot-dependent slack to each hop}
		\ForEach{$\tau \in \mathcal{T}_i$}{
			$p \leftarrow \mathbb{P}[f_i,\tau]$\;
			$Slack_i^\tau \leftarrow D_i^{target} - D_{fixed,i}^\tau$\;
			
			\If{$D_i^{target} + WCD_{total,i}^\tau > D_i$}{
				mark $f_i$ infeasible\;
				\textbf{break}
			}
			
			\ForEach{$v \in p$}{
				$t_{extra,i,v}^\tau \leftarrow \frac{Slack_i^\tau}{|p|}$\tcp*{$|p|=|\{v\,|\,v\in p_i^\tau\}|$}
				$\Delta t_{i,v}^\tau \leftarrow d_{proc,v} + t_{extra,i,v}^\tau$\;
			}
		}
	}
	\Return $\mathbb{P}, \mathbf{\Delta t}$\;
\end{algorithm}

Finding Feasible Independent Set (Alg.~\ref{alg:findset}):
This subroutine constructs the current layer $\mathcal{S}_{layer}$ using a minimum-conflict greedy rule to mitigate collision-induced jitter.

\begin{algorithm}[!t]
	\small
	\DontPrintSemicolon
	\SetAlgoLined
	\caption{Subroutine: FindMaxFeasibleIS}
	\label{alg:findset}
	\KwIn{Unscheduled $\mathcal{U}$, Scheduled $\mathbb{P}$, Global Counters $n_e^\tau$, Per-Link Source Sets $SrcOnLink$, Previous Paths $\mathbb{P}^{\tau-1}$}
	\KwOut{Layer Subset $\mathcal{S}_{layer}$}
	\BlankLine
	
	$\mathcal S_{layer},LL\gets\emptyset$\;
	Initialize $LinkCounts[e]\gets 0,\forall e\in G^\tau$\;
	Initialize $TmpSrc[e] \gets \emptyset,\ \forall e \in G^\tau$\;
	
	\tcc{1. Conflict Degree Calculation}
	\ForEach{$f_i \in \mathcal U$}{
		\ForEach{$e \in \bigcup_{p \in \mathcal P_i^\tau[1:K]} p$}{
			\If{$s_i \notin TmpSrc[e]$}{
				$TmpSrc[e] \gets TmpSrc[e] \cup \{s_i\}$\;
				$LinkCounts[e] \gets LinkCounts[e] + 1$\;
			}
		}
	}
	\ForEach{$f_i\in\mathcal U$}{
		$(cd_i,p_i^\star)\gets \min_{p\in \mathcal P_i^\tau[1{:}K]} \sum_{e\in p}(LinkCounts[e]-1)$\;
	}
	Sort $\mathcal U$ by $cd_i\uparrow$ ($D_i$ tie)\;
	
	\tcc{2. Greedy Selection}
	\ForEach{$f_i\in\mathcal U$}{
		$p_{best}\gets\emptyset$; $p\gets \mathbb P^{\tau-1}[f_i]$\;
		\If{$p\neq\emptyset$ \textbf{and} IsPathValid($p,G^\tau$) \textbf{and} $p\cap LL=\emptyset$
			\textbf{and} CheckGlobalFeasibility($f_i,p,\mathbb P\cup\mathcal S_{layer},n_e^\tau,SrcOnLink$)}{
			$p_{best}\gets p$\;
		}
		\If{$p_{best}=\emptyset$}{
			$p\gets p_i^\star$\;
			\If{$p\neq\emptyset$ \textbf{and} $p\cap LL=\emptyset$ \textbf{and}
				CheckGlobalFeasibility($f_i,p,\mathbb P\cup\mathcal S_{layer},n_e^\tau,SrcOnLink$)}{
				$p_{best}\gets p$\;
			}
		}
		\If{$p_{best}=\emptyset$}{
			\ForEach{$p\in \mathcal P_i^\tau[1{:}K]$}{
				\If{$p=p_i^\star$ \textbf{or} $p\cap LL\neq\emptyset$ \textbf{or}
					CheckGlobalFeasibility($f_i,p,\mathbb P\cup\mathcal S_{layer},n_e^\tau,SrcOnLink$)==False}{
					\textbf{continue}\;
				}
				$p_{best}\gets p$; \textbf{break}\;
			}
		}
		\If{$p_{best}\neq\emptyset$}{
			Add $(f_i,p_{best})$ to $\mathcal S_{layer}$\;
			$LL\gets LL\cup\{e\mid e\in p_{best}\}$\;
		}
	}
	\Return $\mathcal{S}_{layer}$\;
\end{algorithm}

\begin{enumerate}
	\item Conflict-degree ordering.
	For each unscheduled flow $f_i$, we estimate link popularity via $LinkCounts[e]$, i.e., the number of unscheduled flows whose candidate paths traverse link $e$.
	For any candidate path $p$, its conflict score is $pc(p)=\sum_{e\in p}(LinkCounts[e]-1)$.
	We define the flow conflict degree as $\mathrm{cd}(f_i)=\min_{p\in\mathcal{P}_i^\tau} pc(p)$ and record $p_i^\star=\arg\min pc(p)$.
	Flows are processed in ascending $\mathrm{cd}(\cdot)$ (deadline $D_i$ breaks ties). For consistency with the overlap definition, the link popularity used in conflict-degree estimation is accumulated at the source level, i.e., multiple flows from the same source contribute only once to a given link.
	
	\item Path continuity.
	To support seamless handover (Sec.~\ref{subsec:handover}), we first reuse the previous-slot path $p_{prev}$ if it remains valid, layer-conflict-free, and globally feasible.
	
	\item Low-conflict path selection.
	If $p_{prev}$ is not eligible, we try $p_i^\star$ first; otherwise we scan the remaining candidates and select the first path that satisfies (i) intra-layer edge-disjointness and (ii) CheckGlobalFeasibility under updated $n_e^\tau$. The procedure CheckGlobalFeasibility tentatively inserts a candidate path $p$ of flow $f_i$ into the current partial schedule $\mathbb{P}\cup \mathcal{S}_{layer}$ and updates the overlap counters according to distinct source nodes. It then checks whether the candidate flow and all already scheduled flows sharing at least one link with $p$ remain feasible. In particular, it recomputes the cumulative WCD, derives the corresponding target delay $D_i^{target}=D_i-WCD_{total}$, and verifies that the resulting slack is non-negative and that the per-hop residence times satisfy the buffer bound $T_{buffer}^{max}$
	
	\item Overlap accounting.
	When computing WCD, only interference from distinct sources is counted due to source serialization.
\end{enumerate}

\subsection{Complexity Analysis}
The computational complexity of CRT-Fast is determined by the iterative layering process executed across $|\mathcal{T}|$ discrete time slots. In the worst-case scenario characterized by dense conflicts, each layer admits only a single flow, resulting in $|\mathcal{F}|$ iterations per slot. Within each iteration, computing conflict degrees requires aggregating link popularity over the $K$ candidate paths and evaluating the minimum path-conflict score for each flow, which costs
$O(|\mathcal{F}| \cdot K \cdot L_{path} + |\mathcal{F}| \log |\mathcal{F}|)$.
Subsequently, the path selection evaluates up to $K$ candidates per flow, and the bottleneck remains CheckGlobalFeasibility, which may validate against up to $O(|\mathcal{F}|)$ previously scheduled flows along a path of length $L_{path}$.
Therefore, the overall worst-case time complexity is bounded by $O(|\mathcal{T}| \cdot |\mathcal{F}|^2 \cdot K \cdot L_{path})$.
Although the path continuity strategy significantly reduces the average execution time by avoiding redundant searches in stable topologies, this polynomial worst-case bound guarantees that the algorithm remains scalable for large-scale LEO satellite networks with thousands of flows.

\section{Performance Evaluation}
\label{sec:evaluation}

In this section, we evaluate the performance of the proposed CRT-Fast algorithm on a simulated LEO satellite network.

\subsection{Experimental Setup}


\textbf{Simulation Environment:} The simulations were conducted using the Iridium and the Starlink constellation. The Iridium constellation is a representative polar-orbit satellite network consisting of 6 orbit planes with 11 satellites per plane, while the Starlink constellation consists of 1,584 satellites in 72 orbits. The dynamic network topology is discretized into a sequence of $m=10$ time slots, with each slot duration set to 10 seconds.
To reflect a realistic multi-service network scenario, we assume the deterministic traffic operates within a dedicated network slice. Accordingly, the ISLs are configured with a bandwidth of 100 Mbps reserved specifically for TT flows. 

\textbf{Traffic Generation and Candidate Paths:}
TT flows are generated by randomly selecting source-destination pairs from the satellite nodes. Each flow has a fixed frame size of 1,500 Bytes and a fixed period of 10 ms. The deadline is defined as
\[
D_i=\min(\alpha d_i^{phy},\ d_i^{phy}+\Delta_{buf},\ D_{max}),
\]
where $d_i^{phy}$ is the shortest-path propagation delay. We set $(\alpha, \Delta_{buf}, D_{max})=(1.5,   30\text{ ms}, 100\text{ ms})$ for Iridium and $(2.0, 80\text{ ms}, 500\text{ ms})$ for Starlink. Candidate paths are generated using Yen’s K-shortest simple paths algorithm~\cite{yen1971finding}, with $K=5$.

\textbf{Baseline Algorithms:}
To validate the effectiveness of CRT-Fast, we compare it against four representative baselines:
\begin{itemize}
	\item DSTMR \cite{jiang2023spatio}: It combines spatio-temporal routing with conservative resource reservation under the same parameter setting as CRT-Fast.  
	\item Strict Non-Overlapping: A conservative lower-bound baseline for schedulability. It enforces a strict non-overlapping constraint ($n_e^\tau \le 1$) and rejects any flow that cannot find a conflict-free path.
	\item Shortest Path First (SPF): A static-routing baseline. It selects the single shortest path for each flow.
	\item Load-Aware Greedy (LAG): A representative dynamic-routing greedy baseline. It sequentially schedules flows by selecting the path with the minimum current congestion cost (Best-Fit).
\end{itemize}

All algorithms are evaluated on the same topology snapshots and TT flow sets. Each experiment is repeated over 5 independent runs with different random seeds, and all reported results are averaged across runs.

\subsection{Performance Analysis}

\textbf{Schedulability Evaluation:}
Fig.~\ref{fig:performance} shows the average scheduling success rate as the traffic load increases on Iridium. CRT-Fast consistently achieves the highest schedulability, maintaining 100\% admission at light loads and 94.5\% under the heaviest tested load. In contrast, the Strict Non-Overlapping drops rapidly as the load increases, because fully conflict-free routing severely limits the feasible solution space. SPF performs reasonably well at light loads but degrades more quickly at higher loads due to its rigid shortest-path selection. LAG improves over SPF by using dynamic routing to alleviate local congestion, but it still performs worse than CRT-Fast, especially in the high-load regime. The DSTMR degrades even faster because its conservative reservation-based routing exhausts available resources early under increasing load. The results show that CRT-Fast improves schedulability more effectively than strict conflict avoidance, rigid shortest-path routing, greedy dynamic routing, or conservative reservation-based scheduling.

\begin{figure}[htbp]
	\centering
	\setlength{\belowcaptionskip}{-0.2cm} 
	\includegraphics[width=0.4\textwidth]{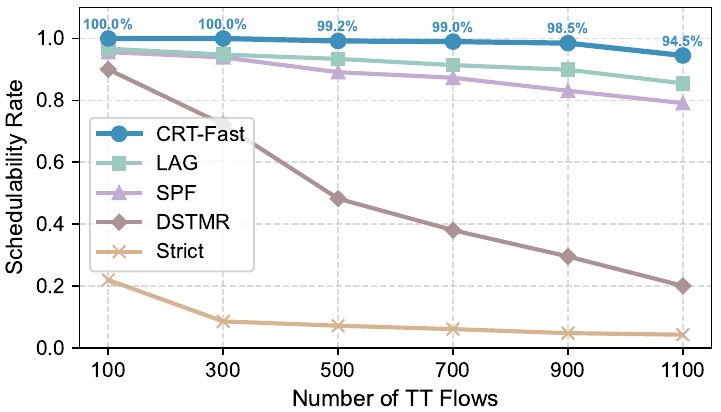}
	\caption{Average flow scheduling success rate.}
	\label{fig:performance}
\end{figure}

\begin{figure}[!t]
	\centering
	\setlength{\belowcaptionskip}{-0.2cm} 
	\subfloat[][\footnotesize{Iridium}]{
		\includegraphics[width=0.47\columnwidth]{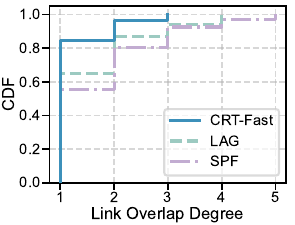}
		\label{fig:cdf_i}
	}
	\subfloat[][\footnotesize{Starlink}]{
		\includegraphics[width=0.47\columnwidth]{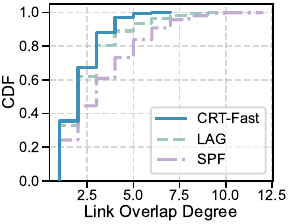}
		\label{fig:cdf_s}
	}
	\caption{CDF of link overlap degrees ($n_e$).}
	\label{fig:cdf_analysis}
\end{figure}

\begin{figure}[!t]
	\centering
	\setlength{\belowcaptionskip}{-0.2cm} 
	\subfloat[][\footnotesize{Iridium}]{
		\includegraphics[width=0.47\columnwidth]{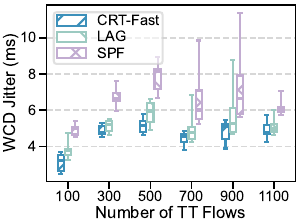}
		\label{fig:jitter_i}
	}
	\subfloat[][\footnotesize{Starlink}]{
		\includegraphics[width=0.47\columnwidth]{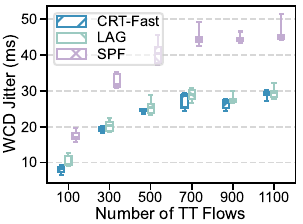}
		\label{fig:jitter_s}
	}
	\caption{Distribution of collision-induced jitter.}
	\label{fig:jitter_analysis}
\end{figure}

\textbf{Conflict Mitigation and Jitter Analysis:}
Fig. \ref{fig:cdf_analysis} and Fig. \ref{fig:jitter_analysis} examine the relationship between link overlap and collision-induced jitter. Taking Iridium as an example (Fig. \ref{fig:cdf_i}), CRT-Fast yields the most concentrated overlap distribution: more than 85\% of links remain at $n_e=1$, and the maximum overlap is limited to 3. In comparison, SPF and LAG exhibit longer tails, with overlap degrees reaching 5 and 4, respectively. This indicates that CRT-Fast is more effective at suppressing local collision intensity. Fig. \ref{fig:jitter_i} shows that the jitter distribution follows the same trend. Because the WCD grows with the overlap degree, the tighter overlap control achieved by CRT-Fast leads to lower median jitter and smaller variance. These results confirm that iterative layering can reduce collision-induced jitter while preserving high schedulability. Since Strict Non-Overlapping enforces fully conflict-free routing, its overlap degree and collision-induced jitter are identically zero.

\textbf{Link-Induced Jitter Analysis:}
To isolate link-delay-induced jitter from collision-induced jitter, we construct a conflict-free traffic setting with 200 TT flows in the Iridium constellation, where no two flows share transmission links in the same slot. Under this setting, the accumulated WCD is zero, and the remaining delay variation is caused mainly by dynamic link delays and route changes across slots. As shown in Fig.~\ref{fig:LinkDelayJitter}, CRT-Fast keeps the link-induced jitter nearly zero across all tested slot lengths, while others exhibit substantially larger jitter and wider distributions.

\begin{figure}[!t]
	\setlength{\belowcaptionskip}{-0.2cm} 
	\centering
	\includegraphics[width=0.42\textwidth]{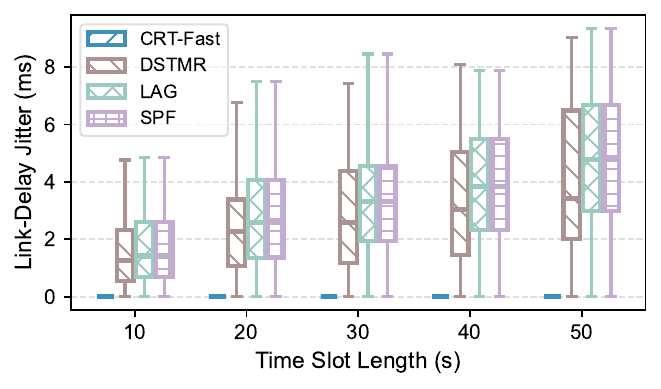}
	\caption{Distribution of link-induced jitter under conflict-free scheduling.}
	\label{fig:LinkDelayJitter}
\end{figure}

\begin{figure}[!t]
	\setlength{\belowcaptionskip}{-0.2cm} 
	\centering
	\includegraphics[width=0.42\textwidth]
	{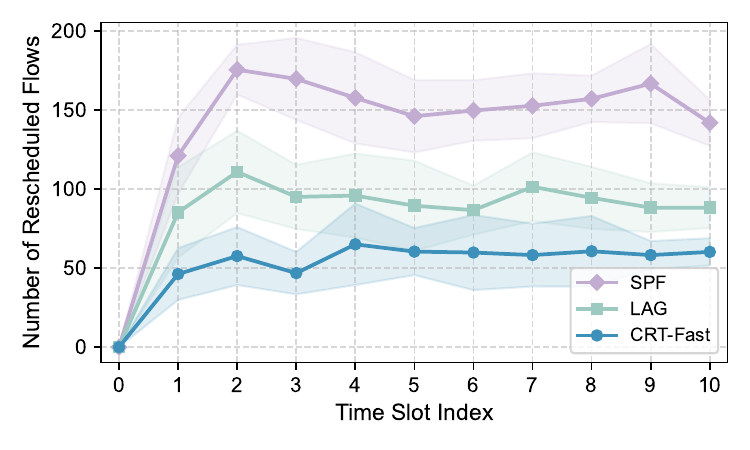}
	\caption{Path stability analysis under dynamic LEO topology handovers.}
	\label{fig:Handover_Stability}
\end{figure}

\textbf{Path Stability under LEO Handovers:}
To evaluate robustness against topology dynamics, we simulated 11 consecutive time slots with 400 flows on Iridium. In each slot, 3\% of links were randomly disconnected and 15\% experienced delay perturbations. We measure stability by the number of rescheduled flows. As shown in  Fig.~\ref{fig:Handover_Stability}, CRT-Fast requires the fewest path updates on average, with 57.3 rescheduled flows per slot. The shaded region indicates $\pm1$ standard deviation. This improvement is mainly due to the path continuity mechanism, which first attempts to reuse the previous-slot path whenever it remains valid. In contrast, LAG and SPF independently re-optimize routing in each slot, making them more sensitive to local topology changes. CRT-Fast reduces rescheduling by about 38.7\% compared with LAG and 62.7\% compared with SPF, indicating stronger path stability under frequent LEO handovers.

\begin{figure}[!t]
	\centering
	\setlength{\belowcaptionskip}{-0.2cm} 
	\includegraphics[width=0.45 \textwidth]{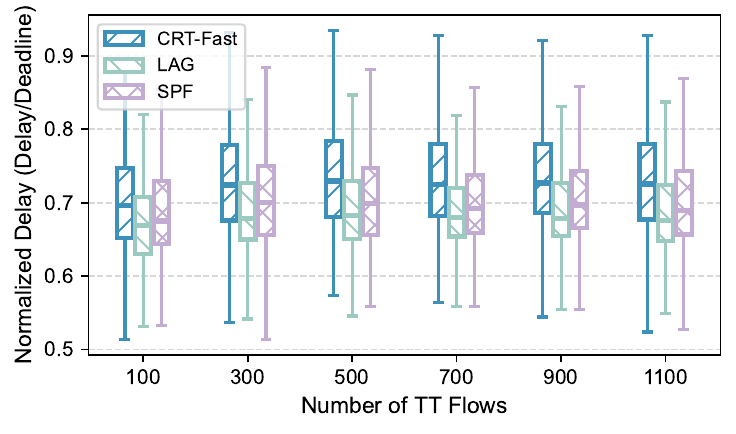}
	\caption{Normalized End-to-End Delay Performance.}
	\label{fig:delayperformance}
\end{figure}

\textbf{End-to-End Delay Performance:}
Fig.~\ref{fig:delayperformance} reports the normalized e2e delay of scheduled flows on Iridium. Compared with SPF and LAG, CRT-Fast exhibits a slightly wider delay range. This increase does not indicate degraded feasibility, rather, it reflects that CRT-Fast admits more hard-to-schedule flows that require longer paths or traverse more congested regions of the network. In contrast, SPF and LAG tend to reject such flows earlier, which leads to a lower average delay among only the schedulable flows. Despite the marginal increase in normalized delay, all flows scheduled by CRT-Fast remain within their deadlines. These results indicate that CRT-Fast improves schedulability while effectively utilizing the available delay slack.

\textbf{Impact of End-to-End Deadline on Scheduling:} We fix the number of flows at 1,000 and vary the e2e deadline in both Iridium and Starlink constellations. As shown in Fig.~\ref{fig:ddl_i} for Iridium and Fig.~\ref{fig:ddl_s} for Starlink, schedulability is improved as the deadline becomes more relaxed. In Iridium, the scheduling success rate increases from 32.4\% at 32 ms to 98.0\% at 128 ms, and reaches 100\% at 320 ms and above. In Starlink, it increases from 63.8\% at 320 ms to 98.3\% at 512 ms, and reaches 100\% at 576 ms and above. These results show that tighter deadlines significantly restrict feasible routing and residence-time allocation, while more relaxed deadlines provide greater scheduling flexibility.

\begin{figure}[t!]
	\centering
	\subfloat[][\footnotesize{Iridium}]{
		\includegraphics[width=0.48\columnwidth]{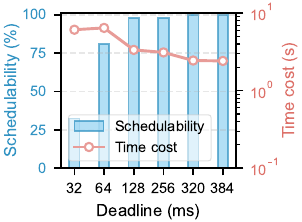}
		\label{fig:ddl_i}
	}
	\subfloat[][\footnotesize{Starlink}]{
		\setlength{\belowcaptionskip}{-0.2cm} 
		\includegraphics[width=0.48\columnwidth]{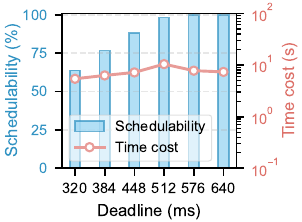}
		\label{fig:ddl_s}
	}
	\caption{Performance of CRT-Fast with different e2e deadline.}
	\label{fig:ddl_analysis}
\end{figure}

\textbf{Scalability Analysis:}
We stress-test CRT-Fast on Starlink by increasing the number of flows from 1,000 to 10,000. As shown in Fig.~\ref{fig:Schedulability}, CRT-Fast maintains strong schedulability over a wide range of large-scale traffic loads, achieving 98.2\% at 1,000 flows and still admitting 66.7\% of flows at 10,000. This result demonstrates that the proposed collision-tolerant scheduling framework remains effective even in very large and dense LEO scheduling scenarios. Meanwhile, the algorithm successfully scales to all tested problem sizes, showing that CRT-Fast can handle large candidate sets and complex topology-constrained scheduling instances on a mega-constellation.

\begin{figure}[!t]
	\centering
	\includegraphics[width=0.45\textwidth]{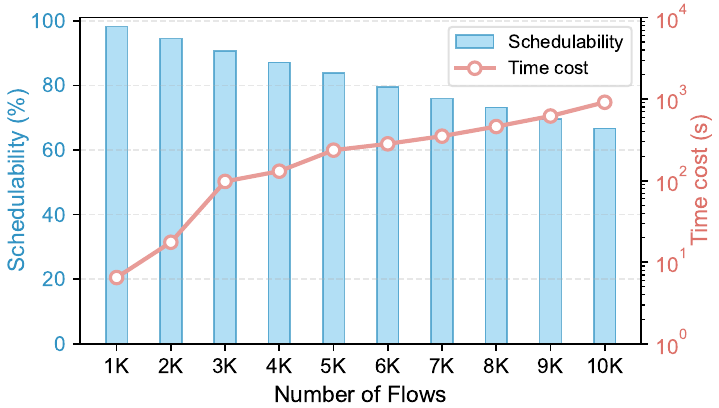}
	\caption{Scalability performance of the CRT-Fast.}
	\label{fig:Schedulability}
\end{figure}

\section{Related Work}

\textbf{Terrestrial deterministic transmission:}
Deterministic transmission in terrestrial networks has been studied primarily in the context of TSN. Early work formulates routing and scheduling for time-triggered traffic as constraint-solving problems, such as SMT, ILP, or related optimization models, in order to compute feasible or optimal schedules~\cite{Atallah2020Routing,schweissguth2017ilp}. To improve support for traffic with timing guarantees, IEEE 802.1Qch introduces Cyclic Queueing and Forwarding (CQF)~\cite{8021qch}, and subsequent studies further refine CQF-based mechanisms for better delay control and resource utilization~\cite{yang2023burst,Yan2020injection}. More recently, learning-based methods such as DeepScheduler and TTDeep explore deep reinforcement learning to accelerate scheduling and improve schedulability without relying entirely on handcrafted rules~\cite{He2023deep,jia2021ttdeep}.

Although these methods are effective in wired TSN environments, they are generally developed under two assumptions: accurate network-wide time synchronization and relatively stable topologies. These assumptions are difficult to satisfy in LEO satellite networks, where links are time-varying, handovers are frequent, and path availability changes continuously. Therefore, terrestrial deterministic transmission techniques cannot be directly applied to LEO scenarios.

\textbf{Deterministic transmission in LEO satellite networks:}
Existing studies on time-sensitive transmission in LEO satellite networks mainly focus on delay reduction, routing adaptation, and reliability enhancement. For example, CPF~\cite{wang2023cpf} and TSN-MCQ~\cite{ma2023time} improve queueing and forwarding mechanisms to provide bounded service delay for time-sensitive flows. Other studies investigate delay-aware routing strategies for dynamic satellite topologies~\cite{dong2023delay,geng2020optimal}. To improve robustness under topology variations and link failures, FastTS~\cite{peng2024fastts} introduces a fault-tolerant heuristic scheduling strategy, while Lai et al.~\cite{lai2023achieving} develop a resilient routing mechanism that improves route restoration under topology fluctuations.

Above all, the existing LEO-oriented approaches still leave two issues insufficiently addressed. First, they mainly focus on delay reduction or routing robustness, but explicitly controlling collision-induced delay jitter. Second, they usually do not model the effect of imperfect time synchronization across different source clock domains, which is a fundamental challenge in asynchronous LEO environments. In contrast, our work jointly considers topology dynamics, asynchronous collisions, and imperfect synchronization, and develops a residence-time-based collision-tolerant scheduling framework to provide bounded jitter and deterministic transmission.

\section{Conclusion}
We studied deterministic transmission in asynchronous LEO satellite networks and proposed CRT, a deterministic transmission framework that operates without requiring global time synchronization. By introducing a residence-time mechanism based on local clocks, CRT compensates for link-delay variations caused by topology dynamics and stabilizes the e2e delay. To address inevitable asynchronous collisions among flows from different sources, we further developed a collision-tolerant scheduling model that improves schedulability while bounding collision-induced jitter under deadline and resource constraints. We showed that the scheduling problem is NP-hard and CRT-Fast, an efficient heuristic algorithm that combines iterative layering with path continuity. Simulations on Iridium and Starlink constellations show that CRT-Fast achieves better schedulability, lower jitter, and stronger path stability than compared baselines. CRT provides a scalable solution for time-sensitive services in dynamic LEO satellite networks.

\bibliographystyle{IEEEtran}
\bibliography{ref}

\begin{thebibliography}{10}
\providecommand{\url}[1]{#1}
\csname url@samestyle\endcsname
\providecommand{\newblock}{\relax}
\providecommand{\bibinfo}[2]{#2}
\providecommand{\BIBentrySTDinterwordspacing}{\spaceskip=0pt\relax}
\providecommand{\BIBentryALTinterwordstretchfactor}{4}
\providecommand{\BIBentryALTinterwordspacing}{\spaceskip=\fontdimen2\font plus
\BIBentryALTinterwordstretchfactor\fontdimen3\font minus
  \fontdimen4\font\relax}
\providecommand{\BIBforeignlanguage}[2]{{%
\expandafter\ifx\csname l@#1\endcsname\relax
\typeout{** WARNING: IEEEtran.bst: No hyphenation pattern has been}%
\typeout{** loaded for the language `#1'. Using the pattern for}%
\typeout{** the default language instead.}%
\else
\language=\csname l@#1\endcsname
\fi
#2}}
\providecommand{\BIBdecl}{\relax}
\BIBdecl

\bibitem{starlink}
F.~Michel, M.~Trevisan, D.~Giordano, and O.~Bonaventure, ``A first look at
  starlink performance,'' in \emph{Proceedings of the 22nd ACM Internet
  Measurement Conference}, 2022, pp. 130--136.

\bibitem{iridium}
S.~R. Pratt, R.~A. Raines, C.~E. Fossa, and M.~A. Temple, ``An operational and
  performance overview of the {IRIDIUM} low earth orbit satellite system,''
  \emph{IEEE Communications Surveys}, vol.~2, no.~2, pp. 2--10, 1999.

\bibitem{Oneweb}
Y.~Henri, ``The oneweb satellite system,'' in \emph{Handbook of Small
  Satellites: Technology, Design, Manufacture, Applications, Economics and
  Regulation}, 2020, pp. 1091--1100.

\bibitem{popovski2022perspective}
P.~Popovski, F.~Chiariotti, K.~Huang, A.~E. Kal{\o}r, M.~Kountouris, N.~Pappas,
  and B.~Soret, ``A perspective on time toward wireless {6G},''
  \emph{Proceedings of the IEEE}, vol. 110, no.~8, pp. 1116--1146, 2022.

\bibitem{yu2023toward}
H.~Yu, T.~Taleb, K.~Samdanis, and J.~Song, ``Toward supporting holographic
  services over deterministic {6G} integrated terrestrial and non-terrestrial
  networks,'' \emph{IEEE Network}, vol.~38, no.~1, pp. 262--271, 2023.

\bibitem{wang2024learning}
Z.~Wang, H.~Yao, T.~Mai, Z.~Li, and C.~P. Chen, ``Learning-driven swarm
  intelligence: Enabling deterministic flows scheduling in {LEO} satellite
  networks,'' \emph{IEEE Transactions on Mobile Computing}, 2024.

\bibitem{sun2025joint}
H.~Sun, H.~Zhang, H.~Ma, and V.~C. Leung, ``Joint scheduling, computing, and
  load balancing for time sensitive traffic in sdn-enabled space-air-ground
  integrated {6G} networks: A federated reinforcement learning approach,''
  \emph{IEEE Transactions on Mobile Computing}, 2025.

\bibitem{xiao2022leo}
Z.~Xiao, J.~Yang, T.~Mao, C.~Xu, R.~Zhang, Z.~Han, and X.-G. Xia, ``{LEO}
  satellite access network ({LEO-SAN}) toward {6G}: Challenges and
  approaches,'' \emph{IEEE Wireless Communications}, vol.~31, no.~2, pp.
  89--96, 2022.

\bibitem{ma2023network}
S.~Ma, Y.~C. Chou, H.~Zhao, L.~Chen, X.~Ma, and J.~Liu, ``Network
  characteristics of {LEO} satellite constellations: A starlink-based
  measurement from end users,'' in \emph{Proceedings of the IEEE INFOCOM
  2023-IEEE Conference on Computer Communications}, 2023, pp. 1--10.

\bibitem{cao2023satcp}
X.~Cao and X.~Zhang, ``Satcp: Link-layer informed {TCP} adaptation for highly
  dynamic {LEO} satellite networks,'' in \emph{Proceedings of the IEEE INFOCOM
  2023 -IEEE Conference on Computer Communications}, 2023, pp. 1--10.

\bibitem{tian2024large}
W.~Tian, C.~Gu, M.~Guo, S.~He, J.~Kang, D.~Niyato, and J.~Chen, ``Large-scale
  deterministic networks: Architecture, enabling technologies, case study, and
  future directions,'' \emph{IEEE Network}, vol.~38, no.~4, pp. 284--291, 2024.

\bibitem{wang2021large}
S.~Wang, B.~Wu, C.~Zhang, Y.~Huang, T.~Huang, and Y.~Liu, ``Large-scale
  deterministic {IP} networks on {CENI},'' in \emph{Proceedings of the IEEE
  INFOCOM 2021-IEEE Conference on Computer Communications Workshops}, 2021, pp.
  1--6.

\bibitem{hu2024time}
Y.~Hu, B.~Guo, C.~Yang, and Z.~Han, ``Time-deterministic networking for
  satellite-based internet-of-things services: Architecture, key technologies,
  and future directions,'' \emph{IEEE Network}, 2024.

\bibitem{finn2018introduction}
N.~Finn, ``Introduction to time-sensitive networking,'' \emph{IEEE
  Communications Standards Magazine}, vol.~2, no.~2, pp. 22--28, 2018.

\bibitem{nasrallah2018ultra}
A.~Nasrallah, A.~S. Thyagaturu, Z.~Alharbi, C.~Wang, X.~Shao, M.~Reisslein, and
  H.~ElBakoury, ``Ultra-low latency ({ULL}) networks: The {IEEE TSN} and {IETF}
  detnet standards and related {5G} ull research,'' \emph{IEEE Communications
  Surveys \& Tutorials}, vol.~21, no.~1, pp. 88--145, 2018.

\bibitem{kopetz2003time}
H.~Kopetz and G.~Bauer, ``The time-triggered architecture,'' \emph{Proceedings
  of the IEEE}, vol.~91, no.~1, pp. 112--126, 2003.

\bibitem{IEEE8021AS}
{IEEE Std 802.1AS}, ``{IEEE Standard for Local and Metropolitan Area
  Networks---Timing and Synchronization for Time-Sensitive Applications},''
  2020.

\bibitem{IEEE1588}
{IEEE Std 1588}, ``{IEEE Standard for a Precision Clock Synchronization
  Protocol for Networked Measurement and Control Systems},'' Jun. 2020.

\bibitem{IEEE8021Qbv}
{IEEE Std 802.1Qbv}, ``{IEEE Standard for Local and Metropolitan Area
  Networks---Bridges and Bridged Networks---Amendment 25: Enhancements for
  Scheduled Traffic},'' Mar. 2016.

\bibitem{pan2023measuring}
J.~Pan, J.~Zhao, and L.~Cai, ``Measuring a low-earth-orbit satellite network,''
  in \emph{Proceedings of the IEEE 34th Annual International Symposium on
  Personal, Indoor and Mobile Radio Communications}, 2023, pp. 1--6.

\bibitem{chen2024asynchronous}
X.~Chen and Z.~Luo, ``Asynchronous interference mitigation for leo
  multi-satellite cooperative systems,'' \emph{IEEE Transactions on Wireless
  Communications}, 2024.

\bibitem{li2018time}
Z.~Li, H.~Wan, Y.~Deng, X.~Zhao, Y.~Gao, X.~Song, and M.~Gu, ``Time-triggered
  switch-memory-switch architecture for time-sensitive networking switches,''
  \emph{IEEE Transactions on Computer-Aided Design of Integrated Circuits and
  Systems}, vol.~39, no.~1, pp. 185--198, 2018.

\bibitem{IEEE8021Qbu}
{IEEE Std 802.1Qbu}, ``{IEEE Standard for Local and Metropolitan Area
  Networks---Bridges and Bridged Networks---Amendment 26: Frame Preemption},''
  2016.

\bibitem{vygen1995np}
J.~Vygen, ``Np-completeness of some edge-disjoint paths problems,''
  \emph{Discrete Applied Mathematics}, vol.~61, no.~1, pp. 83--90, 1995.

\bibitem{yen1971finding}
J.~Y. Yen, ``Finding the k shortest loopless paths in a network,''
  \emph{management Science}, vol.~17, no.~11, pp. 712--716, 1971.

\bibitem{jiang2023spatio}
X.~Jiang, Y.~Huang, J.~Li, H.~He, S.~Chen, F.~Yang, and J.~Yang,
  ``Spatio-temporal routing, redundant coding and multipath scheduling for
  deterministic satellite network transmission,'' \emph{IEEE Transactions on
  Communications}, vol.~71, no.~5, pp. 2860--2875, 2023.

\bibitem{Atallah2020Routing}
A.~A. Atallah, G.~B. Hamad, and O.~A. Mohamed, ``Routing and scheduling of
  time-triggered traffic in time-sensitive networks,'' \emph{IEEE Transactions
  on Industrial Informatics}, vol.~16, no.~7, pp. 4525--4534, 2020.

\bibitem{schweissguth2017ilp}
E.~Schweissguth, P.~Danielis, D.~Timmermann, H.~Parzyjegla, and G.~M{\"u}hl,
  ``{ILP}-based joint routing and scheduling for time-triggered networks,'' in
  \emph{Proceedings of the 25th International Conference on Real-Time Networks
  and Systems}, 2017, pp. 8--17.

\bibitem{8021qch}
{IEEE Std 802.1Qch}, ``{IEEE Standard for Local and Metropolitan Area
  Networks---Bridges and Bridged Networks---Amendment: Cyclic Queuing and
  Forwarding},'' 2017.

\bibitem{yang2023burst}
D.~Yang, Z.~Cheng, W.~Zhang, H.~Zhang, and X.~Shen, ``Burst-aware
  time-triggered flow scheduling with enhanced multi-{CQF} in time-sensitive
  networks,'' \emph{IEEE/ACM Transactions on Networking}, vol.~31, no.~6, pp.
  2809--2824, 2023.

\bibitem{Yan2020injection}
J.~Yan, W.~Quan, X.~Jiang, and Z.~Sun, ``Injection time planning: Making {CQF}
  practical in time-sensitive networking,'' in \emph{Proceedings of the IEEE
  INFOCOM 2020 -IEEE Conference on Computer Communications}, 2020, pp.
  616--625.

\bibitem{He2023deep}
X.~He, X.~Zhuge, F.~Dang, W.~Xu, and Z.~Yang, ``Deepscheduler: Enabling
  flow-aware scheduling in time-sensitive networking,'' in \emph{Proceedings of
  the IEEE INFOCOM 2023 -IEEE Conference on Computer Communications}, 2023, pp.
  1--10.

\bibitem{jia2021ttdeep}
H.~Jia, Y.~Jiang, C.~Zhong, H.~Wan, and X.~Zhao, ``Ttdeep: Time-triggered
  scheduling for real-time ethernet via deep reinforcement learning,'' in
  \emph{Proceedings of the IEEE Global Communications Conference}, 2021, pp.
  1--6.

\bibitem{wang2023cpf}
F.~Wang, D.~Wu, W.~He, Z.~Li, Q.~Zhang, and H.~Yao, ``{CPF}: Bridging
  time-sensitive networks into large-scale {LEO} satellite networks,'' in
  \emph{Proceedings of the International Wireless Communications and Mobile
  Computing}, 2023, pp. 1--6.

\bibitem{ma2023time}
X.~Ma, S.~Li, Z.~Guan, J.~Li, H.~Sun, Y.~Wang, and H.~Guo, ``Time-sensitive
  networking mechanism aided by multilevel cyclic queues in {LEO} satellite
  networks,'' \emph{Electronics}, vol.~12, no.~6, p. 1357, 2023.

\bibitem{dong2023delay}
F.~Dong, Y.~Zhang, G.~Liu, H.~Yu, and C.~Sun, ``Delay-sensitive service
  provisioning in software-defined low-earth-orbit satellite networks,''
  \emph{Electronics}, vol.~12, no.~16, p. 3474, 2023.

\bibitem{geng2020optimal}
S.~Geng, S.~Liu, Z.~Fang, and S.~Gao, ``An optimal delay routing algorithm
  considering delay variation in the {LEO} satellite communication network,''
  \emph{Computer Networks}, vol. 173, p. 107166, 2020.

\bibitem{peng2024fastts}
G.~Peng, S.~Wang, T.~Huang, F.~Li, K.~Zhao, Y.~Huang, and Z.~Xiong, ``{FastTS}:
  Enabling fault-tolerant and time-sensitive scheduling in space-terrestrial
  integrated networks,'' \emph{IEEE Journal on Selected Areas in
  Communications}, vol.~42, no.~12, pp. 3551--3565, 2024.

\bibitem{lai2023achieving}
Z.~Lai, H.~Li, Y.~Wang, Q.~Wu, Y.~Deng, J.~Liu, Y.~Li, and J.~Wu, ``Achieving
  resilient and performance-guaranteed routing in space-terrestrial integrated
  networks,'' in \emph{Proceedings of the IEEE INFOCOM 2023 -IEEE Conference on
  Computer Communications}, 2023, pp. 1--10.

\end{thebibliography}

\end{document}